\newtheorem{theorem}{Theorem}
\newtheorem{conjecture}[theorem]{Conjecture}
\newtheorem{definition}[theorem]{Definition}
\newtheorem{lemma}[theorem]{Lemma}
\newenvironment{proof}[1][Proof]{\noindent\textbf{#1.} }{\ \rule{0.5em}{0.5em}}
\newenvironment{prooflemma2}[1][Proof of lemma 2]{\noindent\textbf{#1.} }{\ \rule{0.5em}{0.5em}}
\def\be{\begin{equation}}
	\def\ee{\end{equation}}
\def\ba{\begin{eqnarray}}	
	\def\ea{\end{eqnarray}}
\newcommand{\calC}{{\cal C}}
\newcommand{\calD}{{\cal D}}
\newcommand{\calF}{{\cal F}}
\newcommand{\calI}{{\cal I}}
\newcommand{\calN}{{\cal N}}
\newcommand{\calT}{{\cal T}}
\newcommand{\rmconv}{{\rm Conv}}
\newcommand{\bbr}[1]{\left(#1\right)}
\newcommand{\1}{^{(1)}}
\begin{document}
\title{Superadditivity in trade-off capacities of quantum channels}
%
%
%

\author{Elton~Yechao~Zhu,~Quntao~Zhuang,~Min-Hsiu~Hsieh,~\IEEEmembership{Senior Member,~IEEE},~and~Peter~W.~Shor
\thanks{E.Y. Zhu is with the Center of Theoretical Physics and Department of Physics, Massachusetts Institute of Technology, Cambridge, MA, 02139 USA (e-mail: eltonzhu@mit.edu).}
\thanks{Q. Zhuang is with the Research Laboratory of Electronics and Department of Physics, Massachusetts Institute of Technology, Cambridge, MA, 02139 USA (e-mail: quntao@mit.edu).}
\thanks{Min-Hsiu Hsieh is with the Centre of Quantum Software and Information (UTS$\ket{\text{QSI}}$), University of Technology Sydney, Australia (e-mail: Min-Hsiu.Hsieh@uts.edu.au).}
\thanks{P.W. Shor is with the Department of Mathematics, Massachusetts Institute of Technology, Cambridge, MA, 02139 USA (e-mail: shor@math.mit.edu).}}
\maketitle

\begin{abstract}
In this article, we investigate the additivity phenomenon in the dynamic capacity of a quantum channel for trading classical communication, quantum communication and entanglement. Understanding such additivity property is important if we want to optimally use a quantum channel for general communication purpose. However, in a lot of cases, the channel one will be using only has an additive single or double resource capacity, and it is largely unknown if this could lead to an superadditive double or triple resource capacity. For example, if a channel has an additive classical and quantum capacity, can the classical-quantum capacity be superadditive? In this work, we answer such questions affirmatively. 

We give proof-of-principle requirements for these channels to exist. In most cases, we can provide an explicit construction of these quantum channels. The existence of these superadditive phenomena is surprising in contrast to the result that the additivity of both classical-entanglement and classical-quantum capacity regions imply the additivity of the triple capacity region. 
\end{abstract}

\begin{IEEEkeywords}
Additivity; Quantum Channel Capacity; Trade-off Capacity Regions; Quantum Shannon theory.
\end{IEEEkeywords}

%
\IEEEpeerreviewmaketitle

\section{Introduction}
\IEEEPARstart{I}{n} studying classical communication, Shannon developed powerful probabilistic tools that connect the theoretic throughput of a channel to an entropic quantity defined on a single use of the channel \cite{Shannon48}. Shannon's noiseless channel coding theorem involves a random coding strategy to prove achievability and entropic inequalities that show optimality, \textit{i.e.,} the converse. This methodology has now become standard in proving finite or asymptotic optimal resource conversions in information theory. 

Quantum Shannon information starts by mimicking classical information theory: typical sets can be generalized to typical subspaces to prove achievability while various entropic inequalities, such as the quantum data processing inequality, can be used to prove the converse. However, the differences between quantum and classical Shannon information are also significant. On one hand, additional resources available in the quantum domain diversify the allowable capacities, resulting in trade-off regions for the resources that are consumed or generated \cite{Devetak04, Devetak08, HsiehWilde10}. The most common, and useful, quantum resource in communication settings is quantum entanglement. Unlike classical shared randomness, which does not increase a classical channel's capability to send more messages, preshared quantum entanglement will generally increase the throughput of a quantum channel for sending classical messages or quantum messages or both \cite{Bennett92, Bennett99, Bennett02, Devetak04, Hsieh10, Datta13}. It thus makes sense to consider the trade-off capacity regions among these three useful resources: entanglement, classical communication, and quantum communication, and this was done in Ref.~\cite{HsiehWilde10}. The result in Ref. \cite{HsiehWilde10} further shows that a coding strategy that exploits the channel coding of these three resources as a whole performs better than strategies that do not take advantage of channel coding. 

On the other hand, single-lettered channel capacity formulas in the classical regime generally become intractable regularized capacity formulas in the quantum regime \cite{Holevo98, Schumacher97, Lloyd97, Shor02, Devetak05}. In other words, evaluation of these capacity quantities requires optimizing channel inputs over an arbitrary finite number of uses of a given channel. This largely blocks our understanding of how quantum channels behave. An extreme example shows the existence of two quantum channels that cannot be used to send a quantum message individually but will have a positive channel capacity when both are used simultaneously \cite{SmithYard08}. However, there are also several examples showing that when additional resources are used to assist, the corresponding assisted capacity will also become additive. The classical capacity over quantum channels is generally superadditive; however, when assisted by a sufficient amount of entanglement, the entanglement-assisted capacity becomes additive \cite{Bennett99, Shor04}. The quantum capacity also exhibits similar properties. When assisted by either entanglement \cite{Devetak04, Devetak08} or an unbounded symmetric side channel \cite{Smith08}, its assisted quantum capacity becomes additive.

This superadditive property of quantum channel capacities has accordingly attracted significant attention. Hastings \cite{Hastings09} proved that the classical capacity over quantum channels is not additive, a result built upon earlier developments by Hayden-Winter \cite{Hayden08} and Shor \cite{Shor04additivity}. Recently, three of us showed a rather perplexing result \cite{ZhuZhuangShor17}: when assisted by an insufficient amount of entanglement, a channel's classical capacity could be superadditive regardless of whether the unassisted classical capacity is additive or not. Further, the additive property of the entanglement-assisted classical capacity shows a form of phase transition. Even if the channel is additive when assisted by a sufficient amount of entanglement or no entanglement at all, it can still be superadditive when assisted with an insufficient amount of entanglement. This phenomenon indicates that quantum channels behave fundamentally differently from classical channels, and our understanding of it is still quite limited.

This paper is inspired by, and aims to extend Ref. \cite{ZhuZhuangShor17}. Will additivity of single or double resource capacities always lead to additivity of a general resource trade-off region? We will study superadditi vity in a general framework that considers the three most common resources of: entanglement, noiseless classical communication and quantum communication. Our results show that (i) additivity of single resource capacities of a quantum channel does not generally imply additivity of double resource capacities, except for the known result \cite{Devetak04} that an additive quantum capacity yields an additive entanglement-assisted quantum capacity region (see Table~\ref{tableDouble}); and (ii) additive double resource capacities does not generally imply an additive triple resource capacity, except for the known case \cite{Hsieh10} that additive classical-entanglement and classical-quantum capacity regions yield an additive triple dynamic capacity (see Table~\ref{tableTriple}). These results again demonstrate how complex a quantum channel can be, and further investigation is required. 

The paper is structured as follows. Section \ref{Sec:Preliminaries} introduces the various definitions, notations and previous results on the triple resource quantum Shannon theory. Section \ref{Sec:Summary} summarizes the various superadditivity results that we establish in the paper. Section \ref{Sec:Framework} establishes the switch channel that we use for all our constructions, and how this reduces the triple resource trade-off formula. Section \ref{Sec:Construction} gives a detailed construction of all the possible superadditivity phenomena.

\section{Preliminaries}\label{Sec:Preliminaries}
In this section, we give definitions of basic entropic quantities used in the paper. We also describe the dynamic capacity theorem. Special cases of this include the various single and double resource capacities. Finally, we define the elementary channels that will be used in our explicit constructions.

A bipartite quantum state $\sigma_{AB}$ is a positive semi-definite matrix in Hilbert space $\cal{H}_A\otimes\cal{H}_B$ with trace one. We define the von Neumann entropy, conherent information and quantum mutual information of $\sigma_{AB}$, respectively, as follows:
\begin{eqnarray*}
S(AB)_\sigma &=& -\text{Tr }[ \sigma_{AB}\log \sigma_{AB}],\\
I(A\rangle B)_{\sigma} &=&  S(B)_{\sigma} -S(AB)_{\sigma}, \\
I(A;B)_\sigma &=& S(A)_\sigma +I(A\rangle B)_{\sigma}, 
\end{eqnarray*}
where $S(A)_{\sigma}$ is the von Neumann entropy of the reduced state $\sigma_A = \text{Tr}_B [\sigma_{AB}]$.

For an ensemble $\{p(x), \sigma^x_{AB}\}_{x\in\mathcal{X}}$, let
\begin{equation*}
\sigma_{XAB}=\sum_{x\in\mathcal{X}} p(x) |x\rangle \langle x|_X\otimes\sigma^x_{AB},
\end{equation*}
where $\{|x\rangle\}$ forms a fixed orthonormal (computational) basis in Hilbert Space $\mathcal{H}_X$. We need the following information quantities as well: 
\begin{eqnarray}
I(A\rangle BX)_{\sigma} &=& \sum_x p(x) I(A\rangle B)_{\sigma_x},\label{eq_conConherent} \\
I(A;B|X)_\sigma &=& \sum_x p(x) I (A;B)_{\sigma_x}, \label{eq_conMutual}\\
I(AX;B)_\sigma &=& I(X;B)_\sigma + I(A;B|X)_\sigma,  \label{eq_Holevo}
\end{eqnarray}
where $I(A\rangle BX)_{\sigma}$ and $I(A;B|X) $ in Eqs.~(\ref{eq_conConherent}) and (\ref{eq_conMutual}) are the conditional coherent information and the conditional mutual information, respectively. $I(X;B)_\sigma$ in Eq.~(\ref{eq_Holevo}) is the Holevo information of $\sigma_{XB} = \text{Tr}_A [\sigma_{XAB}]$.

A quantum channel $\mathcal{N}$ is a completely positive and trace-preserving map. With it, we can transmit either classical or quantum information or both with possible entanglement assistance between the sender and the receiver \cite{Hsieh10}. More generally, the authors in Ref. \cite{HsiehWilde10} proved the following capacity theorem that involves a noisy quantum channel $\mathcal{N}$ and the three resources mentioned above; namely, classical communication (C), quantum communication (Q) and quantum entanglement (E). 
\begin{theorem}[CQE trade-off \cite{HsiehWilde10}] 
\label{thm_CQE}
The dynamic capacity region $ \calC_{CQE}\left(\calN\right) $ of a quantum channel $ \calN $ is equal to the following expression:
\be
\calC_{CQE}\left(\calN\right)=\overline{\bigcup_{k=1}^{\infty}\frac{1}{k}\calC^{(1)}_{CQE}\left(\calN^{\otimes k}\right)},\nonumber
\ee
where the overbar indicates the closure of a set. The region $ \calC^{(1)}_{CQE}\left(\calN\right) $ is equal to the union of the state-dependent regions $ \calC_{CQE,\sigma}^{(1)}\left(\calN\right) $:
\be
\calC_{CQE}^{(1)}\left(\calN\right)\equiv \bigcup_{\sigma}\calC_{CQE,\sigma}^{(1)}\left(\calN\right).\nonumber
\ee
The state-dependent region $ \calC_{CQE,\sigma}^{(1)}\left(\calN\right) $ is the set of all rates $ C $, $ Q $ and $ E $, such that
\begin{align}
C+2Q&\leq I(AX;B)_\sigma, \label{eq:CQ}\\
Q+E&\leq I(A\rangle BX)_\sigma,\label{eq:QE}\\
C+Q+E&\leq I(X;B)_\sigma+I(A\rangle BX)_\sigma. \label{eq:CQE}
\end{align}
The above entropic quantities are with respect to a classical-quantum state (cq state) $ \sigma_{XAB} $, where
\be\label{eq:cqinput}
\sigma_{XAB}\equiv\sum_xp(x)\ket{x}\bra{x}_X\otimes\calN_{A'\to B}\left(\phi^x_{AA'}\right),
\ee
and the states $ \phi^x_{AA'} $ are pure. 
\end{theorem}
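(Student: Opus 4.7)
The plan is to split the argument into an achievability part and a converse, both established for a single use of $\calN$ to produce $\calC^{(1)}_{CQE}(\calN)$, and then obtain the full regularized region by applying the single-use results to $\calN^{\otimes k}$, normalizing by $k$, and taking the union over $k$ followed by closure. The regularization step is unavoidable because the single-letter region need not be additive, but it follows formally once the single-letter result is in hand since every code of length $n = mk$ can be viewed as a code on $\calN^{\otimes k}$ of length $m$.

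For achievability I would construct a ``classically enhanced father'' protocol that, for any cq-input of the form in Eq.~(\ref{eq:cqinput}), transmits classical information at rate $I(X;B)_\sigma$ and quantum information at rate $\tfrac12 I(A;B|X)_\sigma$ while consuming entanglement at rate $\tfrac12 I(A;E|X)_\sigma$, where $E$ is the environment of a Stinespring dilation of $\calN$. The construction layers an HSW-style classical code on the $X$ register over a conditional father (entanglement-assisted quantum communication) protocol applied to the purifications $\phi^x_{AA'}$; achievability of the inner father protocol itself uses random coding with typical projectors and a decoupling argument. Composing this protocol with the three unit resource inequalities -- super-dense coding $[q\to q] + [qq] \geq 2[c\to c]$, teleportation $2[c\to c] + [qq] \geq [q\to q]$, and entanglement distribution $[q\to q] \geq [qq]$ -- traces out the full polytope cut by Eqs.~(\ref{eq:CQ})--(\ref{eq:CQE}), and time-sharing then fills in the interior.

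For the converse, assume a sequence of $(n,C,Q,E)$ codes with vanishing error. Introducing $X$ as the classical message register and $A$ as the joint system of quantum message and sender's share of entanglement, the bound on $C+2Q$ follows from combining the Holevo bound on the classical-message decoding with the coherent-information bound obtained via the quantum data processing inequality, where the factor of $2$ on $Q$ reflects the super-dense-coding asymmetry between $[c\to c]$ and $[q\to q]$. The bound on $Q+E$ is an entanglement-assisted quantum-capacity bound conditioned on the classical message $X$. The bound on $C+Q+E$ combines a Holevo bound with a conditional coherent-information bound. After invoking Fano's inequality and continuity of entropy to absorb the error terms, all three bounds evaluate on the \emph{same} state $\sigma^{(n)}_{XAB}$ induced on $\calN^{\otimes n}$ by the code, so dividing by $n$ places the rate triple inside $\tfrac{1}{n}\calC^{(1)}_{CQE}(\calN^{\otimes n})$ up to vanishing corrections, and hence in the closure.

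The main obstacle is the converse, specifically extracting a \emph{single} cq-state that simultaneously witnesses all three bounds. This requires carefully packaging the classical-message register output by the classical decoder together with the coherent-information quantity piped to the quantum decoder into one cq-state $\sigma_{XAB}$ of the prescribed form, and tracking error terms uniformly across the three inequalities so that the same $\sigma$ can be used for all of them after taking the limit. Once this uniform converse is in place, the regularization is the standard blocking argument and the closure operation only handles boundary points.
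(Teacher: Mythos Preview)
This theorem is not proved in the present paper; it is quoted from Ref.~\cite{HsiehWilde10} as background in the Preliminaries section, so there is no ``paper's own proof'' to compare against. Your sketch is in fact an accurate outline of the argument given in \cite{HsiehWilde10}: achievability via the classically enhanced father protocol combined with the unit resource inequalities (the paper even alludes to this decomposition in Appendix~\ref{Appendix:convexhullclosure}, writing $\calC_{CQE}^{(1)}(\calN) = \bigcup_\sigma \calC^{(1)}_{CQE,\mathrm{CEF}}(\calN)_\sigma + \calC_{CQE,\mathrm{unit}}$), and a multi-letter converse via the quantum data processing inequality and Fano-type bounds that produces a single witnessing cq-state, followed by regularization.
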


We say that the dynamic capacity of a channel $\calN$ is \emph{additive} if 
\be \label{eq_addtiviedynamicC}
\calC_{CQE}\left(\calN\right)=\calC_{CQE}^{(1)}\left(\calN\right).
\ee

The dynamic capacity region $\calC_{CQE}(\calN)$ in Theorem~\ref{thm_CQE} allows us to recover known capacity theorems by choosing certain $(C,Q,E)$ in Eqs.~(\ref{eq:CQ})-(\ref{eq:CQE}) as follows:
\begin{itemize}
\item the classical capacity $\calC_{C}(\calN)$ when choosing $Q=E=0$ \cite{Holevo98,Schumacher97};
\item the quantum capacity $\calC_{Q}(\calN)$ when choosing $C=E=0$ \cite{Lloyd97,Shor02,Devetak05};
\item the classical and quantum capacity $\calC_{CQ}(\calN)$ when choosing $E=0$ (CQ trade-off) \cite{DevetakShor05};
\item the entanglement assisted classical capacity $\calC_{CE}(\calN)$ when choosing $Q=0$ (CE trade-off) \cite{Bennett02,Shor04};
\item the entanglement assisted quantum capacity $\calC_{QE}(\calN)$ when choosing $C=0$ (QE trade-off) \cite{Devetak04,Devetak08};
\end{itemize}
Additivity of these special cases follows similarly from Eq.~(\ref{eq_addtiviedynamicC}).

We note that the dynamic capacity region is concave, as a convex combination of any two points in the region can be achieved by a time-sharing strategy, \textit{i.e.,} using the channel for a fraction of uses to achieve one point, and using it for the other fraction to achieve the second point.

Below we will briefly describe a few channels which we will repeatedly use. 
\begin{definition}
A \textit{Hadamard channel} is a quantum channel whose complementary channel is entanglement breaking. Suppose $ \Psi_{A'\to B} $ is a Hadamard channel, with the complementary channel $ \Psi^c_{A'\to E} $. Then there is a degrading map $ \calD_{B\to E} $ such that
\be
\Psi^c_{A'\to E}=\calD_{B\to E}\circ\Psi_{A'\to B}.\nonumber
\ee
Moreover, $ \calD $ can be decomposed as
\be
\calD_{B\to E}=\calD^2_{Y\to E}\circ\calD^1_{B\to Y},\nonumber
\ee
where $ Y $ is a classical variable.	
\end{definition}
A Hadamard channel has an additive quantum dynamic capacity region, when tensored with an arbitrary quantum channel \cite{Bradler10}. Examples of Hadamard channels include the qubit dephasing channel, $ 1\to N $ cloning channels, and the Unruh channel. We'll define the qubit dephasing channel below, but refrain from giving definitions of other Hadamard channels, since their exact forms are not needed for understanding this work. We refer the interested readers to Ref. \cite{Bradler10} for more details and properties of these channels.
\begin{definition}
The \textit{qubit dephasing channel} $ \Psi^{\textrm{dph}}_\eta $, with dephasing probability $\eta$, is defined as
\be
\Psi^{\textrm{dph}}_\eta\left(\rho\right)=(1-\eta)\rho+\eta Z\rho Z.\nonumber
\ee	
\end{definition}

\begin{definition}
The \textit{qubit depolarizing channel} $ \Psi^{\textrm{dpo}}_p $, with depolarizing probability $p$, is defined as
\be
\Psi^{\textrm{dpo}}_p\left(\rho\right)=(1-p)\rho+p\frac{I}{2}.\nonumber
\ee	
\end{definition}
The qubit depolarizing channel is known to have an additive classical capacity \cite{King02}, but a superadditive quantum capacity \cite{Divincenzo98}.

\begin{definition}
A \textit{random orthogonal channel} $ \Psi^{\textrm{ro}} $ is defined as
\be
\Psi^{\textrm{ro}}\left(\rho\right)=\sum_{i=1}^DP_iO_i\rho O_i^\intercal,\nonumber
\ee
where $O_i$ are chosen from the orthogonal group and the probabilities $P_i$ are roughly equal.	
\end{definition}
For $ 1\ll D\ll N $, with $ N $ the input dimension, such a channel will have a subadditive minimum output entropy with high probability \cite{Hastings09}.
\begin{definition}
Consider an arbitrary channel $ \Psi_{C\to B} $. Append a register $ R $ to the input, with a set of orthonormal bases $\{\ket{j}\}$ and $ |R|=|B|^2 $. We define its \textit{unitally extended channel} \cite{Shor04additivity, Fukuda07} $ \Phi_{RC\to B} $ as
\be
\Phi_{RC\to B}(\rho_{RC})=\sum_jX(j)\Psi_{C\to B}\left(\bra{j}\rho_{RC}\ket{j}_R\right)X(j)^\dagger,\label{eq:unitaryencoding}
\ee
where $ \{X(j):j\in \{1,\dots,|R|\}\} $ are the Heisenberg-Weyl operators.	
\end{definition}
The unital extension of a random orthogonal channel will have a superadditive classical capacity with high probability \cite{Shor04additivity}.

\section{Summary of Results}\label{Sec:Summary}

We summarize all of our results here. We will denote the single capacity region by a single letter, e.g. C for $\calC_{C}(\calN)$. We will also use short notation for double and triple trade-off regions, e.g. CE for $\calC_{CE}(\calN)$ and CQE for $\calC_{CQE}(\calN)$. We will use the arrow notation, with ``$ \to $'' meaning additivity of the left-hand side capacity implies additivity of the right-hand side capacity, and ``$ \not{\to} $'' meaning additivity of the left-hand side capacity does not imply additivity of the right-hand side capacity.
\begin{table}
\centering
\begin{tabular}{ |c||c|c|c|  }
 \hline
& \multicolumn{3}{c|}{Imply the additivity of} \\
 \hline
Additive capacities  &CE&CQ&QE\\
 \hline
C   & N~\cite{ZhuZhuangShor17} & N~\cite{Divincenzo98} & N~\cite{Divincenzo98}\\
Q   & N~(Sec\ref{Subsec:QaddCQnonadd}) & N~(Sec\ref{Subsec:QaddCQnonadd}) & Y~\cite{Devetak08}\\
C,Q$\Leftrightarrow$ C,QE & N~(Sec\ref{Subsec:CnQaddCEnonadd}) & N~(Sec\ref{Subsec:CnQaddCQnonadd}) & Y\cite{Devetak08}\\
 \hline
\end{tabular}
\caption{Summary of results for double resources.``N'' stands for ``does not imply additivity'', while ``Y'' means ``implies additivity''.
\label{tableDouble}
}
\end{table}
\begin{table}
\centering
\begin{tabular}{ |c||c|  }
 \hline
& Imply the additivity of \\
 \hline
Additive capacities  &CQE\\
 \hline
QE   & N (Sec\ref{Subsec:QaddCQnonadd})\\
CQ   & N (Sec\ref{Subsec:CQadd})\\
CE   & N (Sec\ref{Subsec:CEadd}) \\
CE,Q$\Leftrightarrow$CE,QE & N~(Sec\ref{Subsec:CEnQadd}) \\
CE,CQ & Y\cite{Hsieh10}\\
 \hline
\end{tabular}
\caption{Summary of results for triple resources. ``N'' stands for ``does not imply additivity'', while ``Y'' means ``implies additivity''.
\label{tableTriple}
}
\end{table}

\subsection{Double resources (see table~\ref{tableDouble})}
\begin{enumerate}
	\item CE:
	\begin{enumerate}
		\item $ C\not{\to} CE $ \cite{ZhuZhuangShor17}: There exists a quantum channel $ \calN $, such that its classical capacity is additive, but its CE trade-off capacity region is superadditive.
		We will give a simplified construction in Sec \ref{Subsec:CaddCEnonadd}.
		\item $ C,Q\not{\to} CE $: There exists a quantum channel $ \calN $, such that its classical and quantum capacities are both additive, but its CE trade-off capacity region is superadditive, \textit{i.e.,}, $ \exists $ a quantum channel $ \calN $ \textit{s.t.}
		\be
		\calC_C\bbr{\calN}=\calC^{(1)}_C\bbr{\calN}\nonumber
		\ee
		and
		\be
		\calC_{Q}\bbr{\calN}=\calC_{Q}\1\bbr{\calN} \nonumber
		\ee
		but
		\be
		\calC_{CE}\left(\calN\right)\supsetneq \calC^{(1)}_{CE}\left(\calN\right).\nonumber
		\ee
		An explicit construction of $\calN$ is given in Sec \ref{Subsec:CnQaddCEnonadd}.
	\end{enumerate}
	\item $ Q\to QE $ \cite{Devetak08}: For all quantum channels $ \calN $, if its quantum capacity is additive, then its QE trade-off capacity region is always additive. 
\pagebreak
	\item CQ:
	\begin{enumerate}
		\item $ C\not{\to}CQ $~\cite{Divincenzo98}: There exists a quantum channel $ \calN $, such that its classical capacity is additive, but its CQ trade-off capacity region is superadditive. 
		The depolarizing channel has a superadditive quantum capacity and hence a superadditive CQ trade-off capacity, while its classical capacity is additive.
		\item $ Q\not{\to} CQ $: There exists a quantum channel $ \calN $, such that its quantum capacity is additive, but its CQ trade-off capacity region is superadditive, \textit{i.e.,} $ \exists $ a quantum channel $ \calN $ \textit{s.t.}
		\be
		\calC_Q\left(\calN\right)=\calC^{(1)}_Q\left(\calN\right)\nonumber
		\ee
		but
		\be
		\calC_{CQ}\left(\calN\right)\supsetneq \calC^{(1)}_{CQ}\left(\calN\right).\nonumber
		\ee
		A construction of this example quantum channel is given in Sec~\ref{Subsec:QaddCQnonadd}.
		\item $ C,Q\not{\to}CQ $: Moreover, there exists a quantum channel $ \calN $, such that its classical and quantum capacities are additive, but its CQ trade-off capacity region is superadditive, \textit{i.e.,} $ \exists $ a quantum channel $ \calN $ \textit{s.t.}
		\be
		\calC_C\left(\calN\right)=\calC^{(1)}_C\left(\calN\right)\nonumber
		\ee
		and
		\be
		\calC_Q\left(\calN\right)=\calC^{(1)}_Q\left(\calN\right),\nonumber
		\ee
		but
		\be
		\calC_{CQ}\left(\calN\right)\supsetneq \calC^{(1)}_{CQ}\left(\calN\right).\nonumber
		\ee
		A construction of this example quantum channel is given in Sec~\ref{Subsec:CnQaddCQnonadd}.
	\end{enumerate}
\end{enumerate}
\subsection{Triple resources (see table~\ref{tableTriple})}
\begin{enumerate}
	\item $ CE\not\to CQE $: There exists a quantum channel $ \calN $ such that its CE trade-off capacity region is additive, but its dynamic capacity region is superadditive, \textit{i.e.,} $ \exists $ a quantum channel $ \calN $ \textit{s.t.}
	\be
	\calC_{CE}\left(\calN\right)= \calC^{(1)}_{CE}\left(\calN\right)\nonumber
	\ee
	but
	\be
	\calC_{CQE}\left(\calN\right)\supsetneq \calC^{(1)}_{CQE}\left(\calN\right).\nonumber
	\ee
	An example is constructed in Sec~\ref{Subsec:CEadd}.
	\item $ CE,Q\not\to CQE $: There exists a quantum channel $ \calN $ such that its quantum capacity and its CE trade-off capacity region are additive, but its dynamic capacity region is superadditive, \textit{i.e.,} $ \exists $ a quantum channel $ \calN $ \textit{s.t.}
	\be
	\calC_Q\bbr{\calN}=\calC_Q\1\bbr{\calN}\nonumber
	\ee
	and
	\be
	\calC_{CE}\bbr{\calN}=\calC_{CE}\1\bbr{\calN},\nonumber
	\ee
	but
	\be
	\calC_{CQE}\bbr{\calN}\supsetneq\calC_{CQE}\1\bbr{\calN}.\nonumber
	\ee
	An example is constructed in Sec~\ref{Subsec:CEnQadd}.
	\item $ CQ\not\to CQE $: There exists a quantum channel $ \calN $ such that its CQ trade-off capacity region is additive, but its dynamic capacity region is superadditive, \textit{i.e.,} $ \exists $ a quantum channel $ \calN $ \textit{s.t.}
	\be
	\calC_{CQ}\bbr{\calN}=\calC_{CQ}\1\bbr{\calN}\nonumber
	\ee
	but
	\be
	\calC_{CQE}\bbr{\calN}\supsetneq\calC_{CQE}\1\bbr{\calN}.\nonumber
	\ee
	An example is given in Sec~\ref{Subsec:CQadd}.
	\item $ CE,CQ\to CQE $ \cite{Hsieh10}: If a quantum channel $ \calN $ has additive CE and CQ trade-off capacity regions, then its dynamic capacity region is also additive. 
	This statement is first observed in Ref.~\cite{Hsieh10}, and an explicit argument can be found in Ref.~\cite{Bradler10}.
\end{enumerate}

\section{Framework}\label{Sec:Framework}

This section presents technical tools that we require for demonstration of superadditivity in trade-off capacities. 
We first define the concept of switch channels. 
\begin{definition}
A \textit{switch channel} $ \calN_{MC\to B}$ between $\calN^0_{C\to B}$ and $\calN^1_{C\to B}$ with $ M $ being a 1-bit switch register is defined as
\begin{align*}
&\calN_{MC\to B}\left(\rho_{MC}\right)\\
=&\calN^0_{C\to B}\left(\bra{0}\rho_{MC}\ket{0}_M\right)+\calN^1_{C\to B}\left(\bra{1}\rho_{MC}\ket{1}_M\right).
\end{align*}
\end{definition}

In quantum information theory, switch channels were first used in Ref.~\cite{Bennett02} to demonstrate the existence of quantum channels such that the quantum capacity is nonzero, but for which pre-shared entanglement does not improve the classical capacity. Subsequently, they are used in Ref.~\cite{Elkouss15} to show the superadditivity of private information, with an alternative definition. Recently, they are also used in Ref.~\cite{ZhuZhuangShor17} to show the superadditivity of the classical capacity with limited entanglement assistance.

One immediate difficulty is that, even if $ \calN^0 $ and $ \calN^1 $ are well-studied, the dynamic capacity region of $ \calN $ may not always have a simple expression in terms of those of $ \calN^0 $ and $ \calN^1 $. This is due to the fact that the switch register $ M $ can be in a superposition state. However, if $ \calN^0 $ and $ \calN^1 $ are unitally extended channels, then the dynamic capacity region of $ \calN $ does have a simple expression.

\begin{lemma}\label{lem:switchunital}
Consider a switch channel $ \calN_{A'\to B} $ between $ \calN^0_{RC\to B} $ and $ \calN^1_{RC\to B} $, with input partition $ A'=MRC $ and $ M $ being a switch register. Here $ \calN^0_{RC\to B} $ and $ \calN^1_{RC\to B} $ are unital extensions of $ \Psi^0_{C\to B} $ and $ \Psi^1_{C\to B} $ respectively. Then
\be
\calC^{(1)}_{CQE}\left(\calN\right)=\rmconv\left(\calC_{CQE}^{(1)}\left(\calN^0\right),\calC_{CQE}^{(1)}\left(\calN^1\right)\right),\nonumber
\ee
where Conv denotes the convex hull of points from the two sets.

If the quantum dynamic capacity region for $ \calN^0\otimes\Psi $ is additive for any $ \Psi $, then we also have
\be
\calC_{CQE}\left(\calN\right)=\rmconv\left(\calC_{CQE}\left(\calN^0\right),\calC_{CQE}\left(\calN^1\right)\right).\nonumber
\ee
\end{lemma}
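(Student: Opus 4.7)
The lemma has two equalities. The second (on the regularized region) reduces to the first plus the additivity hypothesis, so the bulk of the work lies in the single-letter statement. For the easy inclusion $\rmconv(\calC^{(1)}_{CQE}(\calN^0),\calC^{(1)}_{CQE}(\calN^1))\subseteq\calC^{(1)}_{CQE}(\calN)$, any input ensemble for $\calN^m$ lifts to an input for $\calN$ by prefixing the classical state $\ket{m}\bra{m}_M$, which pins the switch channel to the $m$-th branch and preserves every entropic quantity appearing in Theorem~\ref{thm_CQE}. Convex combinations of rates across the two branches then follow from time-sharing, i.e., feeding $\calN$ the mixed input $\sum_m P(m)\,\ket{m}\bra{m}_M\otimes\sigma^m$ with the appropriate weights.

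For the converse $\calC^{(1)}_{CQE}(\calN)\subseteq\rmconv(\cdots)$, the plan is to start from an arbitrary ensemble $\{p(x),\ket{\phi^x}_{AMRC}\}$ and exploit the fact that the switch channel depends only on the $M$-diagonal blocks of its input. Writing $\ket{\phi^x}_{AMRC}=\sum_m\sqrt{q^x_m}\,\ket{m}_M\ket{\psi^{x,m}}_{ARC}$, the channel output decomposes as $\calN(\phi^x)|_{AB}=\sum_m q^x_m\calN^m(\psi^{x,m}_{ARC})$. Enlarging the classical variable from $X$ to $X'=(X,M)$ cannot shrink the achievable rate region, so without loss of generality $M$ is classical in the ensemble, and conditioning on $M=m$ yields a valid input $\sigma^m$ for $\calN^m$. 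A chain-rule expansion of the three trade-off inequalities then groups them into $\sum_m P(m)$ times the corresponding inequalities for $\sigma^m$, plus cross-branch contributions such as $I(M;B)_\sigma$.

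The main obstacle is to show that these cross-branch terms do not push the achievable $(C,Q,E)$ strictly outside the convex hull. Here the unital-extension hypothesis is used decisively: the Heisenberg--Weyl operator $X(j)$ appearing inside each $\calN^m$ can be absorbed into the classical variable by further appending the measurement record $J$ of $R$ to $X'$ and conditionally undoing $X(j)^\dagger$ on $B$, both of which preserve every relevant entropy. After this reduction the effective channel acting on $C$ is the bare $\Psi^m$, and the generous register $R$ with $|R|=|B|^2$ inside either branch has enough room to re-encode the single bit carried by the switch register. Spelling this out should yield explicit points $(C_m,Q_m,E_m)\in\calC^{(1)}_{CQE}(\calN^m)$ whose $P$-weighted average matches $(C,Q,E)$, completing the converse.

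For the regularized equality, iterate the above argument on $\calN^{\otimes k}$: every tensor factor is again a switch between unital extensions, and the assumption that $\calN^0\otimes\Psi$ has additive dynamic capacity for arbitrary $\Psi$ lets the regularized regions of the branch channels collapse to their single-letter versions at every stage of the induction. Passing $k\to\infty$ and taking the closure then upgrades the single-letter identity to $\calC_{CQE}(\calN)=\rmconv(\calC_{CQE}(\calN^0),\calC_{CQE}(\calN^1))$, with the hardest step again being the absorption of the switch-register contribution into each unital extension's region.
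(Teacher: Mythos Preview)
Your overall architecture is right and matches the paper: reduce $M$ (and then $R$) to classical registers, condition on $M=m$ to land in $\calC^{(1)}_{CQE}(\calN^m)$, and for the regularized statement iterate over $\calN^{\otimes k}$ using the additivity of $\calN^0$. The place where your sketch is imprecise is the handling of the cross-branch term $I(M;B)_\sigma$. After making $M$ classical, the Holevo piece satisfies $I(X;B)_\sigma=\sum_m p_m I(X;B)_{\sigma^m}+I(M;B)_\sigma$, and $I(M;B)_\sigma\geq 0$ goes the \emph{wrong} way for the inclusion you need. Your proposed fix, to ``re-encode the single bit carried by the switch register'' inside $R$, is not obviously workable as stated: once you have measured $R$ and folded its record $J$ into $X'$, there is no $R$ left to re-encode anything, and you would in any case need to boost each branch's $I(X;B)$ by exactly its share of $I(M;B)$ without perturbing $I(A\rangle BX)$ or $I(AX;B)$.

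The paper sidesteps this bookkeeping with a cleaner device (Lemma~\ref{lem:unitaldynamic}): rather than stripping $R$ away, it \emph{re-randomizes} the $R$-label to be uniform over all Heisenberg--Weyl indices. The twirl identity $|R|^{-1}\sum_k X(k)\rho X(k)^\dagger=I_B/|B|$ then forces $S(B)_\sigma=\log|B|$, and because the same twirl acts within each branch, also $S(B)_{\sigma^m}=\log|B|$. Hence $I(M;B)_\sigma=0$ identically, and all three quantities in Theorem~\ref{thm_CQE} split exactly as $\sum_m p_m(\cdot)_{\sigma^m}$ with nothing left to absorb. Your ``append $J$ and undo $X(j)^\dagger$ on $B$'' step is the right ingredient, but it should be used to randomize $R$, not to eliminate it. For the regularized part, note that $\calN^{\otimes k}$ is a switch among the $k+1$ branch types $(\calN^0)^{\otimes a}\otimes(\calN^1)^{\otimes(k-a)}$ (each again a unital extension), and the paper also verifies separately (Appendix~\ref{Appendix:convexhullclosure}) that $\rmconv\bbr{\calC_{CQE}(\calN^0),\calC_{CQE}(\calN^1)}$ is already closed, so the closure in the definition of $\calC_{CQE}$ is harmless.
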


The rest of this section is devoted to the proof of this lemma.

Firstly, we note that switch channels and unitally extended channels fall under a broader class of channels that we call partial classical-quantum channels (partial cq channels). 
\begin{definition}
A channel $ \Psi_{RC\to B} $ is a \textit{partial cq channel} if there exists a noiseless classical channel $ \Pi_{R\to R} $ with orthonormal basis $ \{\ket{j}_R\} $, such that
\be \label{eq:cqchannnel}
\Psi_{RC\to B}=\Psi_{RC\to B}\circ \Pi_{R\to R}.
\ee	
\end{definition}
If there is no register $ C $, then such channels are classical-quantum channels (cq channels).

For partial cq channels, one can always assume inputs are cq states with respect to the input partition $ R $ and $ C $ for the purpose of evaluating capacities, as we show in Lemma \ref{lem:cqCQE} below.
\begin{lemma}\label{lem:cqCQE}
If $ \Psi_{A'\to B} $ is a partial cq channel with partition $ A'=RC $, then the optimal trade-off surface of the 1-shot dynamic capacity region $ \calC_{CQE}^{(1)}\left(\Psi\right) $ can be achieved with respect to cq states $ \sigma_{XAB}=\Psi_{A'\to B}\left(\rho_{XAA'}\right) $, where $ \rho_{XAA'} $ is of the form
\be \label{eq:lem2input}
\rho_{XAA'}=\sum_{x,j}p(x,j)\ket{x,j}\bra{x,j}_X\otimes\ket{j}\bra{j}_R\otimes\phi^{xj}_{AC}. 
\ee
\end{lemma}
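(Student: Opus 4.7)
The plan is to take any cq input $\rho_{XAA'}$ that realizes a point on the optimal trade-off surface of $\calC_{CQE}^{(1)}\left(\Psi\right)$ and produce a refined input of the form (\ref{eq:lem2input}) whose state-dependent region contains the original. Since $\calC_{CQE}^{(1)}\left(\Psi\right)$ is defined as a union of state-dependent regions over all cq inputs, this suffices.

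First, I would exploit the partial cq property. Writing a general cq input as $\rho_{XAA'}=\sum_{x}p(x)\,\ket{x}\bra{x}_{X}\otimes\phi^{x}_{AA'}$ with each $\phi^{x}_{AA'}$ pure, the identity $\Psi=\Psi\circ\Pi_{R}$ lets me apply $\Pi_{R}$ to the input without altering the channel output. Expanding each pure state along the $R$ basis as $\ket{\phi^{x}}_{AA'}=\sum_{j}\sqrt{p(j|x)}\,\ket{j}_{R}\otimes\ket{\psi^{xj}}_{AC}$ with pure, normalized $\ket{\psi^{xj}}_{AC}$, the dephased state becomes
\[
(\Pi_R\otimes \mathrm{id}_A)(\phi^{x}_{AA'})=\sum_j p(j|x)\,\ket{j}\bra{j}_R\otimes\phi^{xj}_{AC}, \quad \phi^{xj}_{AC}=\ket{\psi^{xj}}\bra{\psi^{xj}}.
\]
Because $j$ is now perfectly correlated with the classical register $R$, I can copy it into the classical label, producing
\[
\rho'_{X'AA'}=\sum_{x,j}p(x)p(j|x)\,\ket{x,j}\bra{x,j}_{X'}\otimes \ket{j}\bra{j}_R\otimes\phi^{xj}_{AC},
\]
which is of the form (\ref{eq:lem2input}) with $p(x,j)=p(x)p(j|x)$.

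The second step is to verify that each of the three RHS quantities in Theorem \ref{thm_CQE} is monotone under the refinement $X\mapsto X'=XJ$. Let $\sigma^{x}_{AB}=\Psi(\phi^{x}_{AA'})$ and $\sigma^{xj}_{AB}=\Psi(\ket{j}\bra{j}_R\otimes\phi^{xj}_{AC})$, so that $\sigma^{x}_{AB}=\sum_j p(j|x)\,\sigma^{xj}_{AB}$. Because $X$ is a classical coarse-graining of $X'$, quantum data processing yields $I(AX';B)_{\sigma'}\geq I(AX;B)_\sigma$ and $I(X';B)_{\sigma'}\geq I(X;B)_\sigma$. For the coherent information, concavity of the quantum conditional entropy $S(A|B)$ — itself a consequence of strong subadditivity — gives $S(A|B)_{\sigma^{x}}\geq\sum_{j}p(j|x)\,S(A|B)_{\sigma^{xj}}$, which rearranges to
\[
I(A\rangle BX)_\sigma\leq I(A\rangle BX')_{\sigma'}.
\]
Combining the three bounds shows $\calC_{CQE,\sigma}^{(1)}(\Psi)\subseteq\calC_{CQE,\sigma'}^{(1)}(\Psi)$, and taking the union over inputs completes the argument.

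The main subtlety lies in the coherent-information step: unlike mutual information, $I(A\rangle B)$ is \emph{not} monotone under arbitrary classical processing, so the argument genuinely relies on the direction of the refinement — $J$ is being added to the conditioning register on the right of the bar, not marginalized away — together with concavity of $S(A|B)$. The other ingredients (the $R$-basis expansion of each $\ket{\phi^{x}}_{AA'}$, the data-processing bounds on $I(AX;B)$ and $I(X;B)$, and closure under the union of state-dependent regions) are routine.
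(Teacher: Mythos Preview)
Your proof is correct and follows essentially the same approach as the paper: dephase $R$ via $\Psi=\Psi\circ\Pi_R$, promote the outcome $j$ to the classical register, and then verify that all three entropic quantities in Theorem~\ref{thm_CQE} can only increase under this refinement. The only cosmetic difference is that for $I(AX;B)$ the paper decomposes it as $S(B)+I(B\rangle AX)$ and reuses the coherent-information convexity argument, whereas you invoke data processing directly on the coarse-graining $X'\to X$; both are valid and equally short.
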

\begin{proof}
We will show that, for any input state
\be\label{eq:lem2generalinput}
\varrho_{XAA'}=\sum_xp(x)\ket{x}\bra{x}_X\otimes\phi^x_{AA'},
\ee
with its output state 
$$\varsigma_{XAB}\equiv\Psi_{A'\to B}\left(\varrho_{XAA'}\right)=\sum_xp(x)\ket{x}\bra{x}_X\otimes\varsigma^x_{AB},$$ 
where $\varsigma^x_{AB}= \Psi_{A'\to B}\left(\phi^x_{AA'}\right)$,
there exists a corresponding state $\rho_{XAA'}$, in the form of Eq.~(\ref{eq:lem2input}), which can achieve the same rate, if not better.

In fact, the state $\rho_{XAA'}$ can be obtained by applying $ \Pi_{R\to R} $ on $\varrho_{XAA'}$ and expanding its classical register $X$. This can be achieved by the following quantum instrument $\calT: R\to RX_R$,
\[
\calT\left(\psi_R\right):=\sum_j\bra{j}\psi_R\ket{j}\ket{j}\bra{j}_R\otimes\ket{j}\bra{j}_{X_R}
\]
so that
\begin{eqnarray}
\rho_{XAA'} &=& \calT (\varrho_{XAA'})\nonumber \\
&=& \sum_{x,j}p(x,j)\ket{x,j}\bra{x,j}_X\otimes\ket{j}\bra{j}_R\otimes\phi^{xj}_{AC} \label{eq_qint}
\end{eqnarray}
where we abuse the notation $X$ to denote $XX_R$ in Eq.~(\ref{eq_qint}), $p(x,j)\equiv  p(x)p(j|x)$, and $p(j|x)= \text{Tr}[\ket{j}\bra{j} \phi^{x}_{AC}] $.

Let $ \sigma_{XAB}=\Psi_{A'\to B}\left(\rho_{XAA'}\right) $. Then
\be
\sigma_{XAB}=\sum_{x,j}p(x,j)\ket{x,j}\bra{x,j}_X\otimes\sigma^{xj}_{AB}\nonumber
\ee
where
\be
\sigma^{xj}_{AB}=\Psi_{A'\to B}\left(\ket{j}\bra{j}_R\otimes\phi^{xj}_{AC}\right).\nonumber
\ee

It follows that
\be
\varsigma^{x}_{AB}=\sum_jp(j|x)\sigma^{xj}_{AB}.\label{eq:jdecomp}
\ee

Since the dynamic capacity region is fully determined by the three entropic quantities $ I(AX;B)_\sigma $, $ I(A\rangle BX)_\sigma $ and $ I(X;B)_\sigma $ in Eqs.~(\ref{eq:CQ})-(\ref{eq:CQE}), it suffices to show that all three entropic quantities evaluated on $ \rho_{XAA'} $ are greater than those evaluated on $ \varrho_{XAA'}$.

\begin{enumerate}
	\item First consider $ I(A\rangle BX)_\sigma $.
	\begin{align}\label{eq:unitalcoherentinfo}
	I(A\rangle BX)_{\sigma}&=\sum_{x,j}p(x,j)I(A\rangle B)_{\sigma^{xj}}\nonumber\\
	&=\sum_{x,j}p(x)p(j|x)I(A\rangle B)_{\sigma^{xj}}\nonumber\\
	&\geq\sum_xp(x)I(A\rangle B)_{\varsigma^x}\nonumber\\
	&=I(A\rangle BX)_{\varsigma},
	\end{align}
	where the inequality is due to Eq.~(\ref{eq:jdecomp}) and the convexity of coherent information with respect to inputs.
	\item Now consider $ I(AX;B)_\sigma $. Similarly,
	\begin{align*}
	I(AX;B)_{\sigma}&=S(B)_{\sigma}+I(B\rangle AX)_{\sigma}\\
	&\geq S(B)_{\varsigma}+I(B\rangle AX)_\varsigma\\
	&=I(AX;B)_\varsigma,
	\end{align*}
	where the inequality is due to $ \sigma_B=\varsigma_B $ and Eq.~(\ref{eq:unitalcoherentinfo}).
	\item Finally consider $ I(X;B)_\sigma $. Writing $ \ket{x,j} $ as $ \ket{x}\ket{j} $, it can be shown
	\be
	I(X;B)_{\sigma}\geq I(X;B)_\varsigma\nonumber
	\ee
	using the data processing inequality when we apply the partial trace map $ \ket{x}\bra{x}\otimes\ket{j}\bra{j}\to\ket{x}\bra{x} $ to $ \sigma_{XB} $.
\end{enumerate}
\end{proof}
\begin{lemma}\label{lem:unitaldynamic}
The optimal trade-off surface of the 1-shot quantum dynamic capacity region of a unitally extended channel can always be achieved with $ \sigma_{XAB} $ such that $ S(B)_\sigma=\log(|B|) $. This extends similarly to higher shots.
\end{lemma}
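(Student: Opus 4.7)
The plan is to take any cq input that reaches a point on the optimal trade-off surface and twirl its $R$ register by the Heisenberg-Weyl operators so that the output marginal on $B$ becomes maximally mixed, without decreasing any of the three entropic quantities in Eqs.~(\ref{eq:CQ})-(\ref{eq:CQE}). A unitally extended channel $\Phi_{RC\to B}$ as in Eq.~(\ref{eq:unitaryencoding}) is trivially a partial cq channel (with $\Pi_{R\to R}$ being the dephasing in the $\{\ket{j}\}$ basis), so by Lemma~\ref{lem:cqCQE} I may assume the input already has the cq form of Eq.~(\ref{eq:lem2input}).

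The key step is to augment the classical register by an independent uniform shift $k\in\{0,\ldots,|R|-1\}$ and consider
\be
\tilde\rho_{\tilde{X}AA'}=\frac{1}{|R|}\sum_{k,x,j}p(x,j)\ket{x,j,k}\bra{x,j,k}_{\tilde{X}}\otimes\ket{j\oplus k}\bra{j\oplus k}_R\otimes\phi^{xj}_{AC}.\nonumber
\ee
The critical ingredient is that the Heisenberg-Weyl operators form a unitary $1$-design: $\frac{1}{|R|}\sum_\ell X(\ell)\tau X(\ell)^\dagger=I/|B|$ for every state $\tau$ on $B$, which is exactly why the definition demands $|R|=|B|^2$. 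Substituting into Eq.~(\ref{eq:unitaryencoding}), the $AB$-state conditioned on the classical label $(x,j,k)$ equals $X(j\oplus k)\,\Psi_{C\to B}(\phi^{xj}_{AC})\,X(j\oplus k)^\dagger$, so averaging over $k$ produces $I/|B|$ on $B$ and $S(B)_{\tilde\sigma}=\log|B|$.

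The main technical step, which I expect to be the only nontrivial point, is verifying that each of the three entropic quantities in Eqs.~(\ref{eq:CQ})-(\ref{eq:CQE}) is weakly larger for $\tilde\sigma$ than for $\sigma$. Because $X(j\oplus k)$ is unitary on $B$, the conditional entropies $S(A|\tilde X)$, $S(B|\tilde X)$ and $S(AB|\tilde X)$ all coincide with the corresponding entropies of $\sigma$ conditioned on $X$; hence $I(A\rangle B\tilde X)_{\tilde\sigma}=I(A\rangle BX)_\sigma$ exactly, while $I(\tilde X;B)_{\tilde\sigma}$ and $I(A\tilde X;B)_{\tilde\sigma}$ exceed their counterparts by $\log|B|-S(B)_\sigma\geq 0$. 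Thus $\tilde\sigma$ achieves at least the same trade-off point and has $S(B)_{\tilde\sigma}=\log|B|$. For higher shots I would apply the twirl copy-by-copy with independent shifts $(k_1,\ldots,k_n)$: the tensor product $X(\ell_1)\otimes\cdots\otimes X(\ell_n)$ is still a $1$-design on $B^{\otimes n}$, so the same argument gives $S(B^n)_{\tilde\sigma}=n\log|B|$ even when the $n$-shot input is entangled across copies through $A$, since the twirl acts as a product of unitaries on $B^n$ conditioned on the classical labels.
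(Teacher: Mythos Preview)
Your proposal is correct and follows essentially the same approach as the paper: reduce to cq inputs via Lemma~\ref{lem:cqCQE}, enlarge the classical register by a fresh uniform label that randomizes the $R$ register, invoke the Heisenberg--Weyl twirl to force $S(B)=\log|B|$, and check that the three entropic quantities do not decrease because the conditional $AB$-states differ only by a unitary on $B$. The only cosmetic difference is that you put $\ket{j\oplus k}$ in the $R$ register (shifting the existing label), whereas the paper simply puts $\ket{k}$ there (replacing it outright); since for fixed $j$ the map $k\mapsto j\oplus k$ is a bijection on $\{0,\dots,|R|-1\}$, the resulting averages and hence all the entropies coincide, and your multi-copy extension via independent per-copy shifts matches the paper's remark that unital extensions are preserved under tensor products.
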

\begin{proof}
Suppose $ \Phi_{RC\to B} $ is unitally extended from $ \Psi_{C\to B} $. Since a unitally extended channel $ \Phi_{RC\to B} $ is a partial cq channel, by Lemma \ref{lem:cqCQE}, we can consider states of the form
\be
\varrho_{XAA'}=\sum_{x,j}p(x,j)\ket{x,j}\bra{x,j}_X\otimes\ket{j}\bra{j}_R\otimes\phi^{xj}_{AC}.\nonumber
\ee
Let $\varsigma_{XAB} = \Phi_{RC\to B} (\varrho_{XAA'})$ with $A'\equiv RC$. Then
\be
\varsigma_{XAB}=\sum_{x,j}p(x,j)\ket{x,j}\bra{x,j}_X\otimes \varsigma^{xjj}_{AB},\nonumber
\ee
where $ \varsigma^{xjk}_{AB}= X(k)\varsigma^{xj}_{AB}X(k)^\dagger $ 
and $ \varsigma^{xj}_{AB}=\Psi_{C\to B}\left(\phi^{xj}_{AC}\right) $. 

We can construct another state of the form in Eq.~(\ref{eq:lem2input}):
\be \label{eq:lem3_cqinput}
\rho_{X'AA'}=\sum_{x,j,k}p(x,j,k)\ket{x,j,k}\bra{x,j,k}_{X'}\otimes\ket{k}\bra{k}_R\otimes\phi^{xj}_{AC},
\ee
where $ p(x,j,k)=p(x,j)/|R| $, and $\sigma_{X'AB}= \Phi_{RC\to B}(\rho_{X'AA'})$:
\be \nonumber
\sigma_{X'AB}=\sum_{x,j,k}p(x,j,k)\ket{x,j,k}\bra{x,j,k}_{X'}\otimes \sigma^{xjk}_{AB},
\ee
where $\sigma^{xjk}_{AB}= X(k)\sigma^{xj}_{AB}X(k)^\dagger $ and $ \sigma^{xj}_{AB}=\Psi_{C\to B}\left(\phi^{xj}_{AC}\right) $.  The state $\sigma_{X'AB}$ satisfies
\begin{align*}
S(B)_{\sigma}&= S\left(\sum_{x,j,k}p(x,j,k)\sigma^{xjk}_B\right)\\
&\geq \sum_{x,j}p(x,j)S\left(\frac{1}{|R|}\sum_k\sigma^{xjk}_B\right)\\
&=\log(|B|),
\end{align*}
where we've used the qudit twirl formula \cite{Wilde13}
\be\label{eq:twirl1}
\frac{1}{|R|}\sum_k\sigma^{xjk}_B=\frac{1}{|R|}\sum_k X(k)\sigma^{xj}_BX(k)^\dagger=\frac{1}{|B|}I_B.
\ee

One can verify that the dynamic capacity region with $\sigma_{X'AB}$ is larger than that with $\varsigma_{XAB}$ as follows:
\begin{alignat}{2}
&I(A\rangle BX')_{\sigma}&&=\sum_{x,j,k}p(x,j,k)I(A\rangle B)_{\sigma^{xjk}}\nonumber\\
& &&=\sum_{x,j}p(x,j)I(A\rangle B)_{\varsigma^{xjj}}=I(A\rangle BX)_\varsigma \label{eq:lem3_1}\\
&I(AX';B)_{\sigma}&&=S(B)_{\sigma}+\sum_{x,j,k}p(x,j,k)I(B\rangle A)_{\sigma^{xjk}}\nonumber\\
& &&=\log(|B|)+\sum_{x,j}p(x,j)I(B\rangle A)_{\varsigma^{xjj}}\geq I(AX;B)_\varsigma\nonumber\\
&I(X';B)_{\sigma}&&=S(B)_{\sigma}-\sum_{x,j,k}p(x,j,k)S(B)_{\sigma^{xjk}}\\
& &&=\log(|B|)-\sum_{x,j}p(x,j)S(B)_{\varsigma^{xjj}}\geq I(X;B)_\varsigma.\label{eq:lem3_3}
\end{alignat}
The key property used in the above equations is, for any Heisenberg-Weyl operator $X(k)$,
\[
S(\sigma_B) = S(X(k)\sigma_B X(k)^\dagger).
\]
\end{proof}

\begin{prooflemma2}
Following from Lemma \ref{lem:unitaldynamic} and Eq.~(\ref{eq:lem3_cqinput}), we only need to consider states of the form
\begin{equation}\label{eq:Ninput}
\rho_{XAA'} =\sum_{m=0}^1p_m\ket{m}\bra{m}_M\otimes\rho^m_{XARC}
\end{equation}
where $ p_m=\sum_{x,k}p(x,m,k)$ and
\[
\rho^m_{XARC}=\sum_{x,k}\frac{p(x,m,k)}{p_m}\ket{x,m,k}\bra{x,m,k}_X\otimes\ket{k}\bra{k}_R\otimes\phi^{xm}_{AC},
\]
with $ p(x,m,k)=p(x,m,k')$ for all $ k,k' $ and $m\in\{0,1\}$.

The corresponding channel output is
\begin{align}\label{eq:lem4output}
&\sigma_{XAB}=\sum_{m=0}^1p_m\sigma^m_{XAB}
\end{align}
where
\be \label{eq:lem4output_m}
\sigma^m_{XAB}=\sum_{x,k}\frac{p(x,m,k)}{p_m}\ket{x,m,k}\bra{x,m,k}_X\otimes\sigma^{xmk}_{AB}
\ee
and
\begin{align*}
&\sigma_{AB}^{xmk}=X(k)\Psi^m_{C\to B}\left(\phi^{xm}_{AC}\right)X(k)^\dagger.
\end{align*}

Then all three of the entropic quantities evaluated on $\sigma_{XAB}$ in Eq.~(\ref{eq:lem4output}) can be decomposed to the corresponding ones evaluated on $ \sigma^m_{XAB} $ given in Eq.~(\ref{eq:lem4output_m}): 
\begin{align*}
I(A\rangle BX)_\sigma=&\sum_{m=0}^1\sum_{x,k}p(x,m,k)I(A\rangle B)_{\sigma^{xmk}}\\
=&\sum_{m=0}^1p_mI(A\rangle BX)_{\sigma^m}.
\end{align*}
Likewise,
\begin{align*}
I(AX;B)_{\sigma}=&\log(|B|)+\sum_{m=0}^1\sum_{x,k}p(x,m,k)I(B\rangle A)_{\sigma^{xmk}}\\
=&\sum_{m=0}^1 p_mI(AX;B)_{\sigma^m}
\end{align*}
and
\begin{align*}
I(X;B)_\sigma
=&\sum_{m=0}^1 p_mI(X;B)_{\sigma^m}.
\end{align*}
This means if we consider inputs of the form (\ref{eq:Ninput}), the triple rate for using $ \calN $ can always be expressed as a linear combination of the triple rates of $ \calN^0 $ and $ \calN^1 $. It is also clear that any linear combination is achievable by the time-sharing principle. Since using states of the form (\ref{eq:Ninput}) is optimal, we have
\begin{align*}
\calC^{(1)}_{CQE}\left(\calN\right)&=\bigcup_{0\leq p\leq 1}p\calC^{(1)}_{CQE}\left(\calN^0\right)+(1-p)\calC^{(1)}_{CQE}\left(\calN^1\right)\\
&=\rmconv\left(\calC^{(1)}_{CQE}\left(\calN^0\right),\calC^{(1)}_{CQE}\left(\calN^1\right)\right).
\end{align*}
Here, addition means Minkowski sum\footnote{For two sets of position vectors $ A $ and $ B $ in Euclidean space, their Minkowski sum $ A+B $ is obtained by adding each vector in $ A $ to each vector in $ B $, \textit{i.e.,} 
$
A+B=\{a+b|a\in A,b\in B\}
$ 
~\cite{Schneider93}.}.
Similarly, we have
\begin{align*}
\calC^{(1)}_{CQE}\bbr{\calN\otimes\calN}=&\rmconv\biggl(\calC^{(1)}_{CQE}\bbr{\calN^0\otimes\calN^0},\\
&\calC^{(1)}_{CQE}\left(\calN^0\otimes\calN^1\right),\calC^{(1)}_{CQE}\left(\calN^1\otimes\calN^1\right)\biggr).
\end{align*}
If the quantum dynamic capacity region is additive for $ \calN^0\otimes\Psi $, for any $ \Psi $, then
\be
\calC^{(1)}_{CQE}\left(\calN^0\otimes\calN^1\right)=\calC^{(1)}_{CQE}\left(\calN^0\right)+\calC^{(1)}_{CQE}\left(\calN^1\right).
\ee
In this case the 1-shot quantum dynamic capacity region for $ \calN\otimes\calN $ can be greatly simplified to
\be
\calC^{(1)}_{CQE}\left(\calN\otimes\calN\right)=\rmconv\left(2\calC_{CQE}\left(\calN^0\right),\calC^{(1)}_{CQE}\left(\calN^1\otimes\calN^1\right)\right). \nonumber
\ee
Similarly,
\begin{align*}
&\calC^{(1)}_{CQE}\bbr{\calN^{\otimes k}}\\
=&\rmconv\biggl(\calC_{CQE}\1\bbr{\bbr{\calN^1}^{\otimes k}},\calC_{CQE}\1\bbr{\calN^0\otimes\bbr{\calN^1}^{\otimes k-1}},\\
&\cdots,\calC_{CQE}\1\bbr{\bbr{\calN^0}^{\otimes k-1}\otimes\calN^1},\calC_{CQE}\1\bbr{\bbr{\calN^0}^{\otimes k}}\biggr).
\end{align*}
Each term $ \calC_{CQE}\1\bbr{\bbr{\calN^0}^{\otimes m}\otimes\bbr{\calN^1}^{\otimes k-m}} $, $0\leq m \leq k$, can be upper bounded as
\begin{align*}
&\calC_{CQE}\1\bbr{\bbr{\calN^0}^{\otimes m}\otimes\bbr{\calN^1}^{\otimes k-m}}\\
=&m\calC_{CQE}\bbr{\calN^0}+\calC_{CQE}\1\bbr{\bbr{\calN^1}^{\otimes k-m}}\\
\subseteq &m\calC_{CQE}\bbr{\calN^0}+(k-m)\calC_{CQE}\bbr{\calN^1}\\
\subseteq & k\rmconv\bbr{\calC_{CQE}\bbr{\calN^0},\calC_{CQE}\bbr{\calN^1}}.
\end{align*}
Here the second line follows from the addivity of the dynamic capacity region of $ \calN^0 $. The third line follows from the definition of $ \calC_{CQE} $. The fourth line follows from the definition of convex hull.
Thus $ \calC^{(1)}_{CQE}\bbr{\calN^{\otimes k}} $ can also be upper bounded as
\be
\calC^{(1)}_{CQE}\bbr{\calN^{\otimes k}}\subseteq k\rmconv\bbr{\calC_{CQE}\bbr{\calN^0},\calC_{CQE}\bbr{\calN^1}}.\nonumber
\ee
and
\begin{align*}
\calC_{CQE}\bbr{\calN}=&\overline{\bigcup_{k=1}^{\infty}\frac{1}{k}\calC^{(1)}_{CQE}\left(\calN^{\otimes k}\right)}\\
\subseteq &\overline{\rmconv\bbr{\calC_{CQE}\bbr{\calN^0},\calC_{CQE}\bbr{\calN^1}}}\\
=&\rmconv\bbr{\calC_{CQE}\bbr{\calN^0},\calC_{CQE}\bbr{\calN^1}}.
\end{align*}
The last equality follows because of the topology of the dynamic capacity region, as we show in Appendix \ref{Appendix:convexhullclosure}.

By a time-sharing protocol, it is obvious that
\be
\calC_{CQE}\bbr{\calN}\supseteq\rmconv\bbr{\calC_{CQE}\bbr{\calN^0},\calC_{CQE}\bbr{\calN^1}}.\nonumber
\ee
Hence
\be
\calC_{CQE}\left(\calN\right)=\rmconv\left(\calC_{CQE}\left(\calN^0\right),\calC_{CQE}\left(\calN^1\right)\right).\nonumber
\ee
\end{prooflemma2}

Note that unital extensions are not unique, and we only used the unitarity of Heisenberg-Weyl operators and the twirl formula Eq. (\ref{eq:twirl1}) in proving the above lemmas. Hence, as long as we have $ K $ unitaries $ \{U_k\}\in U(d) $ that satisfy the twirl formula
\be
\frac{1}{K}\sum_kU_k AU_k^\dagger=\textrm{Tr}(A)\frac{I}{d}\label{eq:twirl}
\ee
for any $ d\times d $ matrix $ A $, one has a valid unital extension, and lemmas \ref{lem:switchunital} and \ref{lem:unitaldynamic} will hold. \footnote{Note that we do not even require $ \calN^0 $ and $ \calN^1 $ to have the same unital extension. However, to ensure the input dimensions of $ \calN^0 $ and $ \calN^1 $ are the same, their unital extensions must involve the same number of unitaries. For this reason, we stick with the Heisenberg-Weyl operators most of the time.}

Moreover, unital extensions are preserved under tensor product of channels: if $ \Phi^1 $ is a unital extension of $ \Psi^1 $, and $ \Phi^2 $ is a unital extension of $ \Psi^2 $, then $ \Phi^1\otimes\Phi^2 $ is also a unital extension of $ \Psi^1\otimes\Psi^2 $. This follows from the fact that if $ \{U_j\}\in U(d_1) $ and $ \{V_k\}\in U(d_2) $ both satisfy Eq. ({\ref{eq:twirl}), then $ \{U_j\otimes V_k\}\in U(d_1d_2) $ also satisfies Eq. (\ref{eq:twirl}).
\section{Explicit Construction of Various Superadditivity Phenomena}\label{Sec:Construction}
With the tools developed in Sec \ref{Sec:Framework}, we can now explicitly construct channels that satisfy the superadditivity properties stated in Sec \ref{Sec:Summary}.
All our constructions utilize the switch channel idea. We always assume that $\calN$ is a switch channel of two unitally extended channels $ \calN^0 $ and $ \calN^1 $. Further, we assume that
\begin{enumerate}[label=\text{(U)}]
    \item \label{Prop:General} $\calN^0 $ has an additive dynamic capacity region, when tensored with another arbitrary channel.
\end{enumerate}
In this setting, we can use Lemma \ref{lem:switchunital} and its reduction to various single-resource and two-resource capacities.

In each construction, we first state the properties that $ \calN^0 $ and $ \calN^1 $ need to satisfy, in addition to Property \ref{Prop:General}. We then show how  the desired superadditivity of the switch channel $\calN$ follows from these properties. In the end, we explicitly construct channels that satisfy the properties we required.


Before we start, we first propose two families of unital extended channels that satisfy \ref{Prop:General}. Many of our explicit constructions of $\calN^0$ will be chosen from these candidates. The first family comes from unital extensions of Hadamard channels. The following lemma shows that the dynamic capacity of the unitally extended Hadamard channels is also additive.


\begin{lemma}\label{lem:unitalHadamard}
The dynamic capacity region is additive for $ \Phi^0 $ and any other channel $ \Psi^1 $, if $ \Phi^0 $ is a unital extension of a Hadamard channel $ \Psi^0 $.
\end{lemma}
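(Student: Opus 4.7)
The plan is to mimic the strategy of Lemma~\ref{lem:unitaldynamic} at the tensor-product level and then invoke the known Hadamard additivity from \cite{Bradler10}. Since $\Psi^1$ acts trivially on the $R$-register of $\Phi^0$'s unital extension, the joint channel $\Phi^0\otimes\Psi^1$ is still a partial cq channel with respect to $R$, so the Heisenberg--Weyl twirl argument of Lemmas~\ref{lem:cqCQE}--\ref{lem:unitaldynamic} applies verbatim. First I would show that the 1-shot dynamic capacity region of $\Phi^0\otimes\Psi^1$ is attained by inputs of the form
\[
\rho=\sum_{x,j,k}\frac{p(x,j)}{|R|}\ket{x,j,k}\bra{x,j,k}_X\otimes\ket{k}\bra{k}_R\otimes\phi^{xj}_{ACA'_2},
\]
for which the conditional $B_1$-marginal of the output is maximally mixed. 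Because $(\Phi^0)^{\otimes n}$ remains a unital extension of $(\Psi^0)^{\otimes n}$ (as noted after Lemma~\ref{lem:switchunital}) and Hadamard-ness is preserved under tensor powers, the same reduction extends to every $n$-shot region.

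Next I would evaluate the three trade-off entropies on such inputs, obtaining
\begin{align*}
I(A\rangle B_1B_2X)_\sigma&=\sum_{x,j}p(x,j)I(A\rangle B_1B_2)_{\xi^{xj}},\\
I(AX;B_1B_2)_\sigma&=\log|B_1|+S(\xi_{B_2})+\sum_{x,j}p(x,j)I(B_1B_2\rangle A)_{\xi^{xj}},\\
I(X;B_1B_2)_\sigma&=\log|B_1|+S(\xi_{B_2})-\sum_{x,j}p(x,j)S(B_1B_2)_{\xi^{xj}},
\end{align*}
with $\xi^{xj}=(\Psi^0\otimes\Psi^1)(\phi^{xj})$. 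Comparing to the entropies for $\Psi^0\otimes\Psi^1$ on the ``de-extended'' cq input $\tau=\sum_{x,j}p(x,j)\ket{x,j}\bra{x,j}_X\otimes\phi^{xj}_{ACA'_2}$, the only change is the replacement $S(B_1B_2)_\tau\to\log|B_1|+S(\xi_{B_2})$, which upward-shifts \eqref{eq:CQ} and \eqref{eq:CQE} by the non-negative amount $\log|B_1|-S(B_1|B_2)_\xi$.

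Finally I would invoke the Hadamard additivity of $\Psi^0\otimes\Psi^1$ from \cite{Bradler10}: since $\Psi^0$ is Hadamard, any triple in $\calC^{(1)}_{CQE}(\Psi^0\otimes\Psi^1)$ decomposes as a sum of triples in $\calC^{(1)}_{CQE}(\Psi^0)$ and $\calC^{(1)}_{CQE}(\Psi^1)$, the decomposition being driven by the classical Hadamard register $Y_1$ produced by the degrading map $\calD^1$. Combining this decomposition with Lemma~\ref{lem:unitaldynamic} applied to $\Phi^0$ alone---which supplies an upward classical-capacity shift of $\log|B_1|-S(\sigma_{0,B_1})$ for any auxiliary $\Psi^0$-input $\sigma_0$---lets me repackage the total triple as an element of $\calC^{(1)}_{CQE}(\Phi^0)+\calC^{(1)}_{CQE}(\Psi^1)$. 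Applying this reasoning at every shot $n$ and regularizing yields $\calC_{CQE}(\Phi^0\otimes\Psi^1)=\calC^{(1)}_{CQE}(\Phi^0)+\calC_{CQE}(\Psi^1)$, while the reverse inclusion is immediate by time-sharing. The hard part will be this last matching: the joint-channel shift is governed by the \emph{conditional} entropy $S(B_1|B_2)_\xi$, whereas $\Phi^0$'s unital-extension bonus depends on the \emph{marginal} $S(\sigma_{0,B_1})$, so one has to track the auxiliary $\sigma_0$ produced by \cite{Bradler10} (via $Y_1$) carefully enough that its $B_1$-marginal entropy is controlled by $S(B_1|B_2)_\xi$.
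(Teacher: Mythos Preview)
Your setup through the twirl reduction is correct and matches the paper's first move: $\Phi^0\otimes\Psi^1$ is a partial cq channel on $R$, so Lemmas~\ref{lem:cqCQE}--\ref{lem:unitaldynamic} restrict attention to inputs of exactly the form you write. Your entropy identities for the three trade-off quantities are also right.

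The gap is in your last step. You try to work at the level of rate triples and invoke \cite{Bradler10} as a black box that splits a $\Psi^0\otimes\Psi^1$ triple into a $\Psi^0$-triple plus a $\Psi^1$-triple, then absorb the unital-extension shift afterward. As you correctly flag, the joint shift is $\log|B_1|-S(B_1|B_2)_\xi$ while the $\Phi^0$-alone shift is $\log|B_1|-S(B_1)_{\sigma_0}$; these are controlled by different states, and there is no a~priori relation $S(B_1)_{\sigma_0}\le S(B_1|B_2)_\xi$ for the $\sigma_0$ produced by the Bradler--Hayden--Touchette--Wilde decomposition. To force that inequality you would need to trace through the degrading-map construction of $\sigma_0$ anyway, so the black-box reduction buys nothing.

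The paper avoids this mismatch by working not with regions but with the dual functionals $f_\lambda$ and $g_\lambda$ of \cite{Bradler10}, for which the unital-extension shift is the \emph{state-independent} constant $\log|B^0|$: after the twirl, $S(B^0B^1)_\sigma=\log|B^0|+S(B^1)_\varsigma$, so
\[
f_\lambda(\Phi^0\otimes\Psi^1)=\log|B^0|+S(B^1)_\varsigma+(\lambda-1)S(B^0B^1|X)_\varsigma-\lambda S(E^0E^1|X)_\varsigma,
\]
and the $\log|B^0|$ term factors out uniformly. The remaining expression is then split by the chain rule and the Hadamard degrading maps $\calD^1,\calD^2$ exactly as in \cite{Bradler10}, giving one piece bounded by $f_\lambda(\Phi^0)-\log|B^0|$ and another by $f_\lambda(\Psi^1)$. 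Because the shift is constant, no ``matching'' issue arises. If you recast your argument in terms of $f_\lambda$ and $g_\lambda$ the difficulty you identify simply disappears; at the region level it does not.
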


The second family is unital extensions of classical channels.
\begin{lemma}\label{lem:unitalclassical}
If $ \Psi^0 $ is a classical channel, then the dynamic capacity region is additive for $ \Psi^0\otimes\Psi^1 $, for arbitrary $ \Psi^1 $. 
The same holds for a unital extension of a classical channel.
\end{lemma}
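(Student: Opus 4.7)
The plan is to prove Lemma~\ref{lem:unitalclassical} in the same spirit as the proofs of Lemmas~\ref{lem:cqCQE} and~\ref{lem:unitaldynamic}: exploit the cq structure of (a unital extension of) a classical channel to restrict the form of the optimal input, then show that the three entropic quantities of Theorem~\ref{thm_CQE} decompose cleanly into a $\Psi^0$-contribution and a $\Psi^1$-contribution.

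For part~1, since $\Psi^0$ is a cq channel, $\Psi^0\otimes\Psi^1$ is a partial cq channel on the $A'_0$ register, so Lemma~\ref{lem:cqCQE} lets me restrict to inputs of the form $\rho_{XAA'}=\sum_{x,i}p(x,i)\ket{x,i}\bra{x,i}_X\otimes\ket{i}\bra{i}_{A'_0}\otimes\phi^{x,i}_{AA'_1}$. The crucial structural feature of the resulting output $\sigma$ is that, conditional on the classical label register $X$, the $B_0$ marginal is classical and independent of $AB_1$. A short computation that exploits this conditional product together with concavity of the von Neumann entropy then yields the exact equality $I(A\rangle B_0B_1X)_\sigma=I(A\rangle B_1X)_{\sigma'}$ together with the two upper bounds $I(AX;B_0B_1)_\sigma\leq I(X;B_0)+I(AX;B_1)_{\sigma'}$ and $I(X;B_0B_1)_\sigma\leq I(X;B_0)+I(X;B_1)_{\sigma'}$, where $\sigma'$ is the corresponding $\Psi^1$-only output. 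Substituting these into the three inequalities of Theorem~\ref{thm_CQE} and splitting the classical rate $C$ between a $\Psi^0$-contribution $C_0\leq I(X;B_0)$ and a residual $C_1=C-C_0$ for $\Psi^1$ shows that the rate region for $\sigma$ lies inside the Minkowski sum $\calC_{CQE}^{(1)}(\Psi^0)+\calC_{CQE}^{(1)}(\Psi^1)$; the reverse containment is immediate from product inputs. Running the same argument on $(\Psi^0)^{\otimes k}\otimes(\Psi^1)^{\otimes k}$---still classical tensored with arbitrary---promotes the single-letter statement to additivity of the full regularized dynamic capacity region.

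For part~2, the input to a unital extension $\Phi^0$ of $\Psi^0$ is first passed through the reduction of Lemma~\ref{lem:unitaldynamic} so that the $B_0$ output is maximally mixed. Because the twirl is built from Heisenberg--Weyl unitaries that preserve each of the three entropies appearing in Theorem~\ref{thm_CQE}, the entropic decomposition developed for part~1 then carries over with essentially no change, and the same Minkowski-sum containment follows.

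The main obstacle is the bookkeeping in the rate split: when the target classical rate $C$ is smaller than $I(X;B_0)$, one has to take $C_0=C$ and then verify that the remaining triple $(0,Q,E)$ is achievable by $\Psi^1$ alone---possibly via a different single-letter state $\sigma^\ast$ rather than the induced $\sigma'$. Showing this amounts to checking that the slack in the two concavity bounds above is consistent across all three inequalities of Theorem~\ref{thm_CQE}, which follows cleanly from the fact that a classical channel contributes nothing to either the quantum or entanglement axes of its single-letter region.
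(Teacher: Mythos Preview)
Your Part~1 is essentially the paper's proof: both invoke Lemma~\ref{lem:cqCQE} to force cq inputs on the classical register, then derive the same three relations
\[
I(A\rangle B_0B_1X)_\sigma=I(A\rangle B_1X)_\sigma,\quad
I(AX;B_0B_1)_\sigma\le I(X;B_0)_\sigma+I(AX;B_1)_\sigma,\quad
I(X;B_0B_1)_\sigma\le I(X;B_0)_\sigma+I(X;B_1)_\sigma,
\]
and conclude the Minkowski-sum containment. Your flagged ``bookkeeping obstacle'' is not actually an obstacle: since the rate triples in $\calC_{CQE}^{(1)}$ carry no sign constraints, you may always split as $(C_0,Q_0,E_0)=(I(X;B_0)_\sigma,0,0)$ and $(C_1,Q_1,E_1)=(C-I(X;B_0)_\sigma,Q,E)$; the three inequalities for $(C_1,Q_1,E_1)$ then follow line-by-line from the three bounds above, with no case distinction and no auxiliary state $\sigma^\ast$ required.

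Your Part~2 differs from the paper. The paper disposes of the unital extension in one line: if the Heisenberg--Weyl operators are taken in the computational basis of the classical output, then $\Phi^0$ is \emph{itself} a classical channel (diagonal states are permuted by the shifts and fixed by the phases), so Part~1 applies verbatim to $\Phi^0\otimes\Psi^1$. Your route through Lemma~\ref{lem:unitaldynamic} also reaches the goal---after the twirl the conditional entropies of $\sigma^{xjk}_{AB_0B_1}$ coincide with those of $(\Psi^0\otimes\Psi^1)(\phi^{xj})$, and Part~1 then takes over---but it is a detour: the twirl and the maximal $S(B_0)$ play no role in the decomposition, which rests entirely on the conditional product structure coming from $\Psi^0$ being classical. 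The paper's observation buys a cleaner reduction with no extra moving parts.
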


The proofs of the above lemmas are left to the Appendices, as they are not essential in understanding the construction.

\subsection{Additive C, Superadditive CE}\label{Subsec:CaddCEnonadd}
Here we review the original argument in \cite{ZhuZhuangShor17} and recast it in the current framework.

We use $ C_P\bbr{\calN} $ when we view $ C\bbr{\calN} $ as a function of the amount of entanglement assistance $ P $, where $ \bbr{C\bbr{\calN},P} $ are points on the CE trade-off curve of $ \calN $. When $ P=0 $, we return to the classical capacity $ \calC_C\bbr{\calN} $. When $ P $ is maximal, we arrive at the classical capacity with unlimited entanglement assistance $ C_E\bbr{\calN} $. $ C_P\1\bbr{\calN} $ denotes the 1-shot case.

We require $ \calN^0 $ and $ \calN^1 $ to have the following properties:

\begin{enumerate}[label=\text{(A\arabic*)}]
	\item \label{Prop:CaddCEnonadd1} $\calC_{C}\bbr{\calN^0}=\calC_{C}\bbr{\calN^1}$.
	\item \label{Prop:CaddCEnonadd2}
	$ \calN^1 $ has a superadditive CE  trade-off capacity region, \textit{i.e.,},
	\be
	\calC_{CE}\bbr{\calN^1}\supsetneq\calC_{CE}\1\bbr{\calN^1},\nonumber
	\ee
	and $ \calC_{CE}\bbr{\calN^1} $ is strictly concave and superadditive at a boundary point of the trade-off region with entanglement consumption $ \bar{P} $.
	\item \label{Prop:CaddCEnonadd3} $\calC_{CE}\bbr{\calN^0}\subsetneq\calC_{CE}\bbr{\calN^1}$
	in the sense the CE trade-off capacity region of $ \calN^0 $ is strictly smaller than that of $ \calN^1 $ when entanglement consumption is at $ \bar{P} $.
\end{enumerate}

In the $ C_P $ notation, property \ref{Prop:CaddCEnonadd2} means at $ P=\bar{P} $, $ C_P\bbr{\calN^1}>C_P\1\bbr{\calN^1} $ and $ C_P\bbr{\calN^1} $ is strictly concave\footnote{Here by saying a function $ f $ is strictly concave at $ y $, we mean $ f(y)>(1-p)f(v)+pf(w) $ for all $ v<y<w $ satisfying $ (1-p)v+pw=y $, with $ p\in(0,1) $.} in $ P $ at $ P=\bar{P} $ . Property \ref{Prop:CaddCEnonadd3} implies that $ C_P\bbr{\calN^0}<C_P\bbr{\calN^1} $ at $ P=\bar{P} $.

Note that these properties are weaker than the ones required in Ref. \cite{ZhuZhuangShor17}.

These three properties \ref{Prop:CaddCEnonadd1}-\ref{Prop:CaddCEnonadd3}, together with \ref{Prop:General}, will guarantee that (i) the classical capacity of $\calN$ is additive; and (ii) the CE trade-off capacity region of $\calN$ is superadditive at entanglement consumption rate $\bar{P}$.

Combining property \ref{Prop:CaddCEnonadd1} with \ref{Prop:General} yields statement (i):
\begin{align*}
\calC_C\bbr{\calN}=&\max\left\{\calC_C\bbr{\calN^0},\calC_C\bbr{\calN^1}\right\}=\calC_C\bbr{\calN^0}\\
=&\max\left\{\calC_C\1\bbr{\calN^0},\calC_C\1\bbr{\calN^1}\right\}=\calC_C\1\bbr{\calN}, 
\end{align*}
where Lemma~\ref{lem:switchunital} is used in the first equality.

Property \ref{Prop:CaddCEnonadd3} ensures that
\begin{align}
\calC_{CE}\bbr{\calN}&=\rmconv\bbr{\calC_{CE}\bbr{\calN^0},\calC_{CE}\bbr{\calN^1}}\nonumber\\
&=\calC_{CE}\bbr{\calN^1}.\label{eq:CENconv}
\end{align}
Since
\be
\calC_{CE}\1\bbr{\calN}=\rmconv\bbr{\calC_{CE}\bbr{\calN^0},\calC_{CE}\1\bbr{\calN^1}},\nonumber
\ee
there exists $ P_0,P_1\geq0 $ and $ p\in [0,1] $ such that $ pP_0+(1-p)P_1=\bar{P} $ and
\be
C_{\bar{P}}\1\bbr{\calN}=pC_{P_0}\bbr{\calN^0}+(1-p)C_{P_1}\1\bbr{\calN^1}.\nonumber
\ee
Statement (ii) follows after considering three different cases.
\begin{enumerate}
	\item $ p=0 $.
	\be
	C_{\bar{P}}\1\bbr{\calN}=C_{\bar{P}}\1\bbr{\calN^1}<C_{\bar{P}}\bbr{\calN^1}=C_{\bar{P}}\bbr{\calN},\nonumber
	\ee
	where the inequality follows from the superadditivity part of property \ref{Prop:CaddCEnonadd2}. The second equality follows from Eq. (\ref{eq:CENconv}).
	\item $ 0<p<1 $.
	\begin{align*}
	C_{\bar{P}}\1\bbr{\calN}&=pC_{P_0}\1\bbr{\calN^0}+(1-p)C_{P_1}\1\bbr{\calN^1}\\
	&\leq pC_{P_0}\bbr{\calN^1}+(1-p)C_{P_1}\bbr{\calN^1}\\
	&<C_{\bar{P}}\bbr{\calN^1}=C_{\bar{P}}\bbr{\calN}
	\end{align*}
	where the first inequality follows from Property \ref{Prop:CaddCEnonadd3}. The second inequality follows from the strict concavity part of property \ref{Prop:CaddCEnonadd2}. The last equality follows from Eq. (\ref{eq:CENconv}).
	\item $ p=1 $. Then
	\be
	C_{\bar{P}}\1\bbr{\calN}=C_{\bar{P}}\bbr{\calN^0}<C_{\bar{P}}\bbr{\calN^1}=C_{\bar{P}}\bbr{\calN}.\nonumber
	\ee
	Here the first equality follows from additivity of the CE trade-off capacity region for $ \calN^0 $. The inequality follows from property \ref{Prop:CaddCEnonadd3}. The last equality follows from Eq. (\ref{eq:CENconv}).
\end{enumerate}

\subsubsection*{Explicit Construction of $\calN$}
We quote the following property about concave functions \cite{Rudin76}:
A concave function $ u(y) $ is continuous, differentiable from the left and from the right. The derivative is decreasing, \textit{i.e.,} for $ x<y $ we have $ u'(x-)\geq u'(x+)\geq u'(y-)\geq u'(y+) $.
We use ``$ \pm $'' to denote the right and left derivatives when needed.

We first construct $ \calN^1 $. Choose  $ \Psi^{\text{ro}} $ to be a random orthogonal channel with a subadditive minimum output entropy, and  $ \Psi^{\text{ro}} $ has input dimension $ N $. This is unitally extended to $ \Phi^{\text{ro}} $.

Due to Lemma~\ref{lem:cqCQE}, the useful entanglement assistance is at most $ \log(N) $. Thus we restrict to $ 0\leq P\leq \log(N) $.

Let
\be
\epsilon=\calC_C\bbr{\Phi^{\text{ro}}}-\calC_C\1\bbr{\Phi^{\text{ro}}}>0\label{eq:gap}.
\ee
Since 
\be
C\1_P\bbr{\Phi^{\text{ro}}}\leq \calC_C\1\bbr{\Phi^{\text{ro}}}+P\label{eq:CPub},
\ee
\be
C_E\bbr{\Phi^{\text{ro}}}\leq \calC_C\bbr{\Phi^{\text{ro}}}+\log(N)-\epsilon.\nonumber
\ee
This implies $ dC_P\bbr{\Phi^{\text{ro}}}/dP $ cannot always be 1. Thus there exists $ \bar{P}\in[0,\log(N)) $ such that
\be
dC_P\bbr{\Phi^{\text{ro}}}/dP=1,~~\forall~ 0\leq P\leq\bar{P}\nonumber
\ee
and
\be
dC_P\bbr{\Phi^{\text{ro}}}/dP<1,~~\forall P>\bar{P}.\nonumber
\ee

Next we discuss different cases of $ \bar{P} $.
\begin{enumerate}
	\item $ \bar{P}>0 $. Then $ C_P\bbr{\Phi^{\text{ro}}} $ is strictly concave at $ \bar{P} $. Furthermore, $ C_{\bar{P}}\bbr{\Phi^{\text{ro}}}-C\1_{\bar{P}}\bbr{\Phi^{\text{ro}}}\geq \epsilon $ since  $ C_{\bar{P}}\bbr{\Phi^{\text{ro}}}=\calC_C\bbr{\Phi^{\text{ro}}}+\bar{P} $
	but $ C\1_{\bar{P}}\bbr{\Phi^{\text{ro}}}\leq \calC_C\1\bbr{\Phi^{\text{ro}}}+\bar{P} $. Thus $ \calN^1=\Phi^{\text{ro}} $ satisfies \ref{Prop:CaddCEnonadd2}.
	\item $ \bar{P}=0 $. Let $ \calN^1=\Phi^{\text{ro}}\otimes\Phi^{\text{dph}}_\eta $, where $ \Phi^{\text{dph}}_\eta $ is the unital extension of the qubit dephasing channel. 
		
	Since $	dC_P\bbr{\Phi^{\text{ro}}}/dP|_{0_+}$$<1$,
	choose $ \eta>0 $ small such that $
	dC_P\left(\Phi^{\text{dph}}_\eta\right)/dP|_{1_-}$$>dC_P\bbr{\Phi^{\text{ro}}}/dP|_{0_+} $. This is possible, as  $C_P\bbr{\Phi^{\text{dph}}_\eta}=C_P\bbr{\Psi^{\text{dph}}_\eta} $ and $ dC_P\bbr{\Psi^{\text{dph}}_\eta}/dP|_{1_-}\to 1 $ as $ \eta\to 0 $.
	This ensures that when $ 0<P\leq 1$,
	\be
	C_P\bbr{\calN^1}=\calC_C\bbr{\Phi^{\text{ro}}}+C_P\left(\Phi^{\text{dph}}_\eta\right)\label{eq:CPadd},
	\ee
	where we've also used Lemma \ref{lem:unitalHadamard}.
	
	For $ \Phi^{\text{dph}}_\eta $, it can be shown that $ C_P\left(\Phi^{\text{dph}}_\eta\right) $ is strictly concave in $ P $ when $ \eta<1/2 $ (see Appendix \ref{Appendix:dephasing}). Hence $ C_P\bbr{\calN^1} $ is also strictly concave with respect to $ P $, for $ 0<P\leq 1 $. Also, when $ P<\epsilon $,
	\begin{align*}
	C_P\bbr{\calN^1}&> \calC_C\bbr{\Phi^{\text{ro}}}+\calC_C\left(\Phi^{\text{dph}}_\eta\right)\\
	&>\calC_C^{(1)}\bbr{\Phi^{\text{ro}}}+\calC_C\bbr{\Phi^{\text{dph}}_\eta}+P\geq C\1_P\bbr{\calN^1}.
	\end{align*}
	Here the first inequality comes from Eq. (\ref{eq:CPadd}) and $ C_P\bbr{\Phi^{\text{dph}}_\eta}>\calC_C\bbr{\Phi^{\text{dph}}_\eta} $ when $ P>0 $. The second inequality comes from our assumption $ P<\epsilon $ and Eq. (\ref{eq:gap}). The last inequality comes from Eq. (\ref{eq:CPub}).\\		
	This ensures that $ C_P\bbr{\calN^1} $ is superadditive. Thus when $ 0<P<\min\{1,\epsilon\}$, $ C_P\bbr{\calN^1} $ is strictly concave and superadditive.
\end{enumerate}

For $ \calN^0 $, as long as it is a unital extension of a classical channel with $ \calC_C\bbr{\calN^0}=\calC_C\bbr{\calN^1} $, it will automatically satisfy property \ref{Prop:CaddCEnonadd3}.

\subsection{Additive C and Q, Superadditive CE}\label{Subsec:CnQaddCEnonadd}
In Section \ref{Subsec:CaddCEnonadd}, we constructed a channel $ \calN $ with an additive classical capacity, but a superadditive CE trade-off capacity region. It's unclear if our construction $ \calN $ has an additive quantum capacity. To extend the argument, we need to make some modifications to the original construction.

In addition to properties \ref{Prop:CaddCEnonadd1}-\ref{Prop:CaddCEnonadd3}, the channels $\calN^0$ and $\calN^1$ need to satisfy
\begin{enumerate}[label=\text{(B\arabic*)}]
	\item \label{Prop:CnQaddCEnonadd} $\calC_{Q}\bbr{\calN^0}\geq\calC_{Q}\bbr{\calN^1}.$
\end{enumerate}

This ensures that the quantum capacity of $ \calN $ is also additive:
\begin{align*}
\calC_{Q}\bbr{\calN}&=\max\left\{\calC_{Q}\bbr{\calN^0},\calC_{Q}\bbr{\calN^1}\right\}=\calC_{Q}\bbr{\calN^0}\\
&=\max\left\{\calC_{Q}\1\bbr{\calN^0},\calC_{Q}\1\bbr{\calN^1}\right\}=\calC_{Q}\bbr{\calN}.
\end{align*}

\subsubsection*{Explicit Construction of $\calN$}
We take the channels $ \calN^0 $ and $ \calN^1 $ that were constructed in Sec \ref{Subsec:CaddCEnonadd}, and compare their quantum capacities. Since $ \calC_Q\bbr{\calN^0}=0 $, we can only have $ \calC_Q\bbr{\calN^0}\leq \calC_Q\bbr{\calN^1} $. If
\be
\calC_Q\bbr{\calN^0}= \calC_Q\bbr{\calN^1},\nonumber
\ee
then \ref{Prop:CnQaddCEnonadd} is automatically satisfied. Hence we will focus on the case where
\be
\calC_Q\bbr{\calN^0}< \calC_Q\bbr{\calN^1}.\nonumber
\ee
In this case, we call these two channels $ \varPhi ^0 $ and $ \varPhi^1 $ respectively. We will construct two new channels $ \calN^0 $ and $ \calN^1 $ that satisfy properties \ref{Prop:CaddCEnonadd1}-\ref{Prop:CaddCEnonadd3} and \ref{Prop:CnQaddCEnonadd}.

We will use the qubit dephasing channel and $ 1\to N $ cloning channel. To make the argument work, we will modify them in the following manner.

For the $ 1\to N $ cloning channel $ \Psi^{1\to N} $, we always tensor an appropriate classical channel, such that the resulting channel has its classical capacity equal to 1, and the output dimension is the same as the input dimension. We denote the resulting channel $ \Psi^N $.

For the dephasing channel, we will tensor a complete depolarizing channel, so that its input and output dimensions match those of $ \Psi^N $. Since tensoring a complete depolarizing channel does not modify the dynamic capacity region of the qubit dephasing channel, we will continue using $ \Psi^{\text{dph}}_\eta $ to denote it.

Based on results in Ref.~\cite{Bradler10}, we can obtain the trade-off capacities of the qubit dephasing channel $ \Psi^{\text{dph}}_\eta $ and modified $ 1\to N $ cloning channel $ \Psi^N $.
We observe that for $ \eta=0.2 $ and $ N=15 $, their trade-off capacities satisfy the following properties (see Fig.~\ref{CE_CQ})
\be
\calC_{Q}\bbr{\Psi^{\text{dph}}_\eta}>\calC_{Q}\bbr{\Psi^N}.\nonumber
\ee
and
\be\label{eq:CEdph1toN}
\calC_{CE}\bbr{\Psi^{\text{dph}}_\eta}\subsetneq\calC_{CE}\bbr{\Psi^N},
\ee
in the sense that $ \Psi^N $ achieves a strictly better classical communication rate than $ \Psi^{\text{dph}}_\eta $, if we have any non-zero amount of entanglement assistance. In the $ C_P $ notation, it means $ C_P\bbr{\Psi^{\text{dph}}_\eta}<C_P\bbr{\Psi^N} $ for all $ P>0 $.
\begin{figure}
    \subfigure[]{
	\includegraphics[width=0.22\textwidth]
	{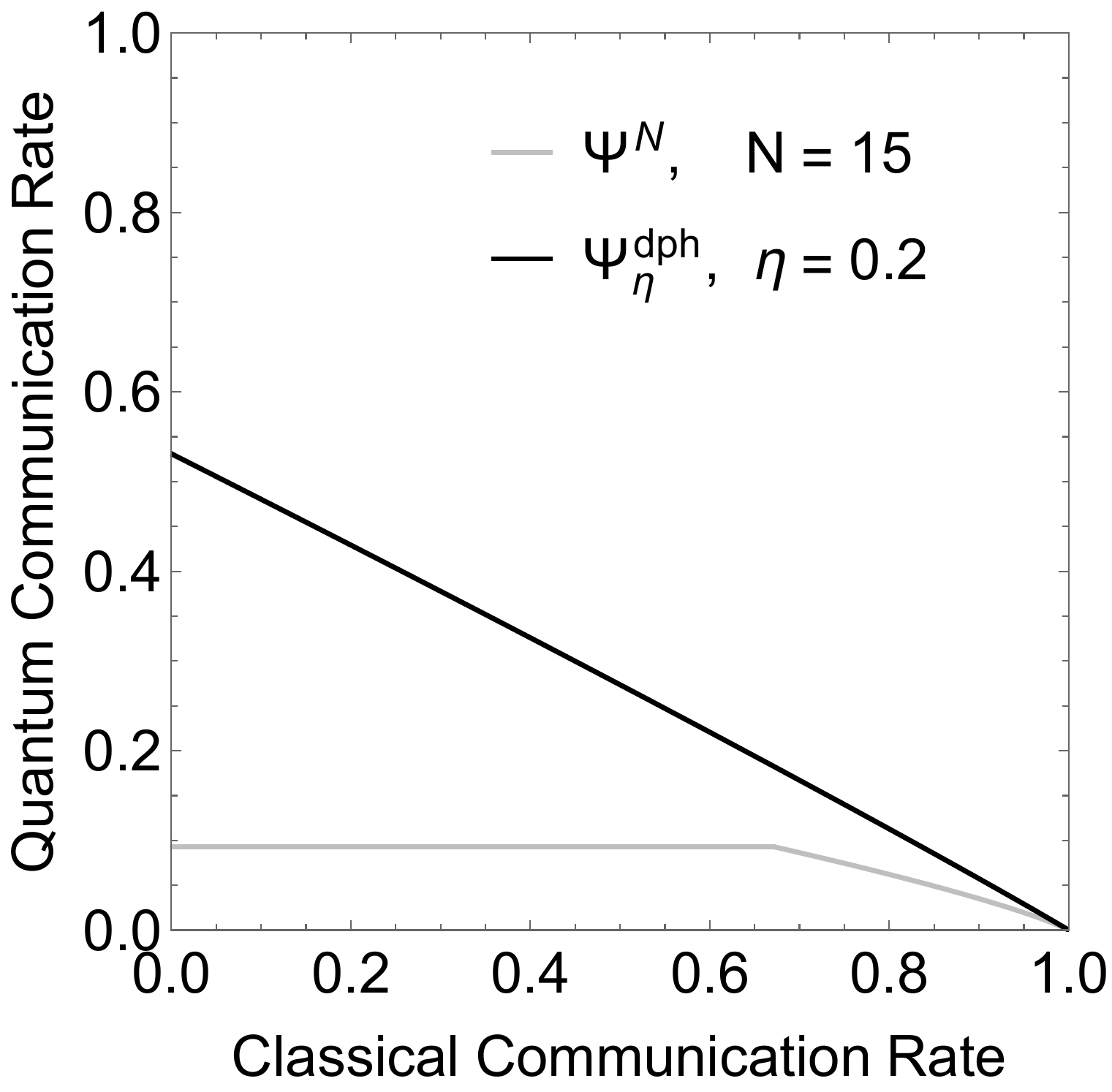}
		\label{fig:C_Q_CE_1}
	}
	\subfigure[]{
	\includegraphics[width=0.22\textwidth]
	{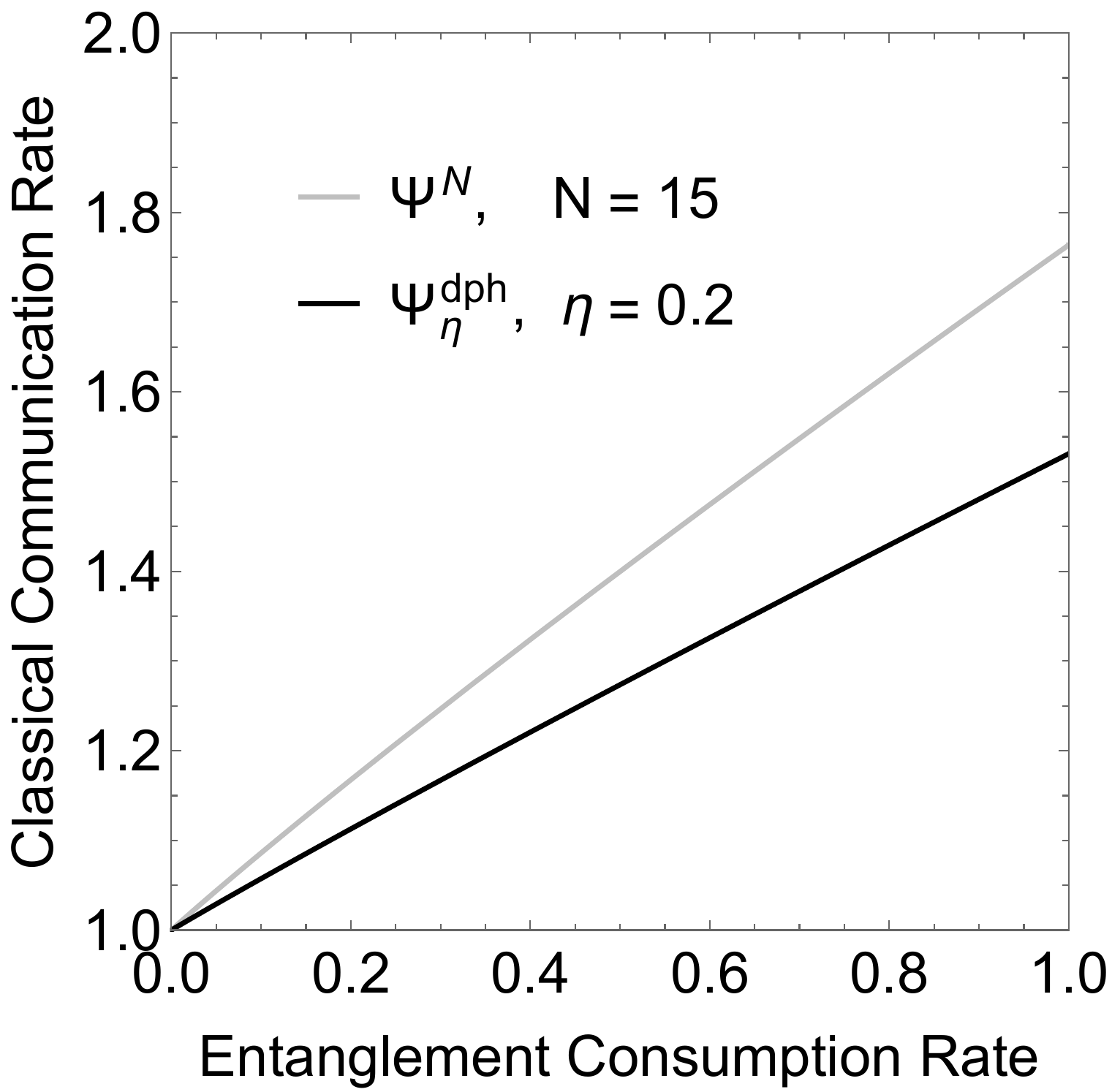}
		\label{fig:C_Q_CE_2}
	}
	\caption{ Comparison of trade-off curves between qubit dephasing channel $ \Psi^{\text{dph}}_\eta $ and modified $ 1\to N $ cloning channel $ \Psi^N $, when $ \eta=0.2 $ and $ N=15 $. (a) CQ trade-off. (b) CE trade-off. 
	\label{CE_CQ}
	}
\end{figure}

Since unital extensions do not change the CE and CQ trade-off capacity regions of these two channels (see Appendix \ref{Appendix:dephasing}), the above properties hold if we replace $ \Psi^{\text{dph}}_\eta $ and $ \Psi^N $ by their unital extensions $ \Phi^{\text{dph}}_\eta $ and $ \Phi^N $ respectively.

Since
\be
\calC_{Q}\bbr{\Phi^{\text{dph}}_\eta}>\calC_{Q}\bbr{\Phi^N},\nonumber
\ee
let $ n $ be large enough so that
\be
n\calC_{Q}\bbr{\Phi^{\text{dph}}_\eta}+\calC_Q\bbr{\varPhi^0}\geq n\calC_{Q}\bbr{\Phi^N}+\calC_Q\bbr{\varPhi^1}.\nonumber
\ee
Define
\be
\calN^0=\bbr{\Phi^{\text{dph}}_\eta}^{\otimes n}\otimes \varPhi^0 \nonumber
\ee
and
\be
\calN^1=\bbr{\Phi^N}^{\otimes n}\otimes \varPhi^1.\nonumber
\ee
Our choice of $ n $ ensures that
\be
\calC_Q\bbr{\calN^0}\geq \calC_Q\bbr{\calN^1}.\nonumber
\ee

We also need to ensure our newly constructed $ \calN^0 $ and $ \calN^1 $ still satisfy properties \ref{Prop:CaddCEnonadd1}-\ref{Prop:CaddCEnonadd3}.

As
\be
\calC_C\bbr{\Phi^{\text{dph}}_\eta}=\calC_C\bbr{\Phi^N}=1\nonumber
\ee
and
\be
\calC_C\bbr{\varPhi^0}=\calC_C\bbr{\varPhi^1},\nonumber
\ee
we immediately have
\be
\calC_C\bbr{\calN^0}=\calC_C\bbr{\calN^1}\nonumber
\ee
and property \ref{Prop:CaddCEnonadd1} is satisfied.

The CE trade-off curve of $ \Psi^{1\to N} $ is strictly concave for $ N\neq1 $ \cite{Bradler10}, hence property \ref{Prop:CaddCEnonadd2} is also satisfied for $ \calN^1 $.

Property \ref{Prop:CaddCEnonadd3} is satisfied due to Eq. (\ref{eq:CEdph1toN}).

\subsection{Additive Q, Superadditive CQ}\label{Subsec:QaddCQnonadd}

We require $ \calN^0 $ and $ \calN^1 $ to have the following properties:
\begin{enumerate}[label=\text{(C\arabic*)}]
	\item \label{Prop:QaddCQnonadd1} $	\calC_{Q}\bbr{\calN^0}\geq\calC_{Q}\bbr{\calN^1}.$	
	\item \label{Prop:QaddCQnonadd2}
	$\calC_{C}\bbr{\calN^1}>\calC_{C}\1\bbr{\calN^1}.$
	\item \label{Prop:QaddCQnonadd3}
    $\calC_{C}\bbr{\calN^0}<\calC_{C}\bbr{\calN^1}.$
\end{enumerate}

These properties \ref{Prop:QaddCQnonadd1}-\ref{Prop:QaddCQnonadd3} will allow us to show that (i) $
\calC_{Q}\bbr{\calN}=\calC_{Q}\1\bbr{\calN}$; and (ii) 
$\calC_{CQ}\bbr{\calN}\supsetneq\calC_{CQ}\1\bbr{\calN}.$

Statement (i) follows from property \ref{Prop:QaddCQnonadd1} and \ref{Prop:General} that $ \calN^0 $ has an additive quantum capacity:
\begin{align*}
\calC_{Q}\bbr{\calN}&=\max\left\{\calC_{Q}\bbr{\calN^0},\calC_{Q}\bbr{\calN^1}\right\}=\calC_{Q}\bbr{\calN^0}\\
&=\max\left\{\calC_{Q}\1\bbr{\calN^0},\calC_{Q}\1\bbr{\calN^1}\right\}=\calC_{Q}\1\bbr{\calN}.
\end{align*}

Properties \ref{Prop:QaddCQnonadd2} and \ref{Prop:QaddCQnonadd3} together ensure
\begin{align*}
\calC_{C}\bbr{\calN}&=\max\left\{\calC_{C}\bbr{\calN^0},\calC_{C}\bbr{\calN^1}\right\}=\calC_{C}\bbr{\calN^1}\\
&>\max\left\{\calC_{C}\1\bbr{\calN^0},\calC_{C}\1\bbr{\calN^1}\right\}=\calC_{C}\1\bbr{\calN},
\end{align*}
\textit{i.e.,} the classical capacity of $ \calN $ is superadditive; hence statement (ii) follows.

\subsubsection*{Explicit Construction of $\calN$}
Next we construct $ \calN^0 $ and $ \calN^1 $ that satisfy the above properties.

Let $ \Psi^{\text{ro}} $ be a random orthogonal channel, such that its unital extension has a superadditive classical capacity. For convenience, we also assume $ \Psi^{\text{ro}} $ has the input dimension $ N=2^n $. Choose $ \eta $ for the qubit dephasing channel $ \Psi^{\text{dph}}_\eta $ such that $ \calC_{Q}\bbr{\Psi^{\text{ro}}}+\calC_{Q}\bbr{\Psi^{\text{dph}}_\eta}=m $ for some integer $ m $.

Define
\be
\calN^1=\Phi^{\text{ro}}\otimes\Phi^{\text{dph}}_\eta,\nonumber
\ee
where $ \Phi^{\text{ro}} $ is a unital extension of $ \Psi^{\text{ro}} $ and $ \Phi^{\text{dph}}_\eta $ is a unital extension of $ \Psi^{\text{dph}}_\eta $. $ \calN^1 $ has the property that its quantum capacity is $ \calC_{Q}\bbr{\calN^1}=m $, whereas its classical capacity is superadditive, and greater than $ m $.

Define
\be
\calN^0=\bbr{\Phi^{\calI}}^{\otimes m}\otimes\bbr{\Phi^{\text{dpo}}_1}^{\otimes n+1-m},\nonumber
\ee
where $ \Phi^{\calI} $ is a unital extension of the noiseless qubit channel, and $ \Phi^{\text{dpo}}_1 $ is a unital extension of the complete qubit depolarizing channel.

It's clear that $ \calN^0 $ has its classical and quantum capacity as $ \calC_{C}\bbr{\calN^0}=\calC_{Q}\bbr{\calN^0}=m $, thus fulfiling the properties \ref{Prop:QaddCQnonadd1} and \ref{Prop:QaddCQnonadd3} above.
\subsection{Additive C and Q, Superadditive CQ}\label{Subsec:CnQaddCQnonadd}

We require $ \calN^0 $ and $ \calN^1 $ to satisfy the following properties:
\begin{enumerate}[label=\text{(D\arabic*)}]
	\item \label{Prop:CnQaddCQnonadd1} 
	$\calC_{C}\bbr{\calN^0}\leq\calC_{C}\bbr{\calN^1}=\calC_{C}\1\bbr{\calN^1}\nonumber
	$ and $
	\calC_{Q}\bbr{\calN^0}=\calC_{Q}\bbr{\calN^1}.$
	\item \label{Prop:CnQaddCQnonadd2} 
	$ \calN^1 $ has a superadditive CQ  trade-off capacity region, meaning
	\be
	\calC_{CQ}\bbr{\calN^1}\supsetneq\calC_{CQ}\1\bbr{\calN^1}.\nonumber
	\ee
	$ \calC_{CQ}\bbr{\calN^1} $ is strictly concave and superadditive at a boundary point with classical communication rate $ \bar{C} $.
	\item \label{Prop:CnQaddCQnonadd3}
	$\calC_{CQ}\bbr{\calN^0}\subsetneq\calC_{CQ}\bbr{\calN^1}$
	in the sense the CQ trade-off capacity region of $ \calN^0 $ is strictly smaller than that of $ \calN^1 $ when classical communication rate is at  $ \bar{C} $.
\end{enumerate}

With these properties, we can show that (i)
$\calC_C\left(\calN\right)=\calC^{(1)}_C\left(\calN\right)$; (ii) $
\calC_Q\left(\calN\right)=\calC^{(1)}_Q\left(\calN\right)$; and (iii) $ \calC_{CQ}\left(\calN\right)\supsetneq \calC^{(1)}_{CQ}\left(\calN\right).$

We'll focus on the CQ trade-off curve. Same as in Section \ref{Subsec:CaddCEnonadd}, we use a simplified notation $ Q_C\bbr{\calN} $ when we view $ Q\bbr{\calN} $ as a function of $ C\bbr{\calN} $. In the 1-shot scenario, it is denoted by $ Q\1_C\bbr{\calN} $. We'll show there exists $ \bar{C}\neq 0 $ such that $ Q_{\bar{C}}\bbr{\calN}>Q_{\bar{C}}\1\bbr{\calN} $.

In the $ Q_C $ notation, property \ref{Prop:CnQaddCQnonadd2} means at $ C=\bar{C} $, $ Q_C\bbr{\calN^1}>Q_C\1\bbr{\calN^1} $ and $ Q_C\bbr{\calN^1} $ is strictly concave in $ C $ at $ C=\bar{C} $. Property \ref{Prop:CnQaddCQnonadd3} implies that $ Q_C\bbr{\calN^0}<Q_C\bbr{\calN^1} $ at $ C=\bar{C} $.

Property \ref{Prop:CnQaddCQnonadd1} and \ref{Prop:General} that $ \calN^0 $ has an additive quantum capacity ensure that
\begin{align*}
\calC_{C}\bbr{\calN}&=\max\left\{\calC_{C}\bbr{\calN^0},\calC_{C}\bbr{\calN^1}\right\}=\calC_{C}\bbr{\calN^1}\\
&=\max\left\{\calC_{C}\1\bbr{\calN^0},\calC_{C}\1\bbr{\calN^1}\right\}=\calC_{C}\1\bbr{\calN}
\end{align*}
and
\begin{align*}
\calC_{Q}\bbr{\calN}&=\max\left\{\calC_{Q}\bbr{\calN^0},\calC_{Q}\bbr{\calN^1}\right\}=\calC_{Q}\bbr{\calN^0}\\
&=\max\left\{\calC_{Q}\1\bbr{\calN^0},\calC_{Q}\1\bbr{\calN^1}\right\}=\calC_{Q}\1\bbr{\calN},
\end{align*}
\textit{i.e.,} $ \calN $ has an additive classical and quantum capacity.

By property \ref{Prop:CnQaddCQnonadd3}, we have
\begin{align}
\calC_{CQ}\left(\calN\right)&=\rmconv\left(\calC_{CQ}\left(\calN^0\right),\calC_{CQ}\left(\calN^1\right)\right)\nonumber\\
&=\calC_{CQ}\left(\calN^1\right).\label{eq:CQNconv}
\end{align}
Since
\be
\calC_{CQ}^{(1)}\left(\calN\right)=\rmconv\left(\calC_{CQ}\left(\calN^0\right),\calC_{CQ}^{(1)}\left(\calN^1\right)\right),\nonumber
\ee
there exists $ C_0,C_1 $ and $ p\in [0,1] $ such that $ pC_0+(1-p)C_1=\bar{C} $ and
\be
Q_{\bar{C}}\1\bbr{\calN}=pQ_{C_0}\bbr{\calN^0}+(1-p)Q_{C_1}\1\bbr{\calN^1}.\nonumber
\ee
Now consider three different cases.
\begin{enumerate}
	\item $ p=0 $.
	\be
	Q_{\bar{C}}\1\bbr{\calN}=Q_{\bar{C}}\1\bbr{\calN^1}<Q_{\bar{C}}\bbr{\calN^1}= Q_{\bar{C}}\bbr{\calN},\nonumber
	\ee
	where the inequality follows from property \ref{Prop:CnQaddCQnonadd2}. The second equality follows from Eq. (\ref{eq:CQNconv}).
	\item $ 0<p<1 $.
	\begin{align*}
	Q_{\bar{C}}\1\bbr{\calN}=&pQ_{C_0}\bbr{\calN^0}+(1-p)Q_{C_1}\1\bbr{\calN^1}\\
	\leq &pQ_{C_0}\bbr{\calN^1}+(1-p)Q_{C_1}\bbr{\calN^1}\\
	<&Q_{\bar{C}}\bbr{\calN^1}= Q_{\bar{C}}\bbr{\calN}.
	\end{align*}
	Here the first inequality follows from the definition of $ \calC_{CQ} $ and property \ref{Prop:CnQaddCQnonadd3}. The second inequality follows from the strict concavity part of property \ref{Prop:CnQaddCQnonadd2}. The last equality follows from Eq. (\ref{eq:CQNconv}).
	\item $ p=1 $. Then
	\be
	Q_{\bar{C}}\1\bbr{\calN}=Q_{\bar{C}}\bbr{\calN^0}<Q_{\bar{C}}\bbr{\calN^1}=Q_{\bar{C}}\bbr{\calN}.\nonumber
	\ee
	Here the inequality follows from property \ref{Prop:CnQaddCQnonadd3}. The last equality follows from Eq. (\ref{eq:CQNconv}).
\end{enumerate}
Hence statement (iii) follows.

\subsubsection*{Explicit Construction}
Now we explicitly construct $ \calN^0 $ and $ \calN^1 $.

Choose $ p $ such that the qubit depolarizing channel $ \Psi^{\text{dpo}}_p $ is known to have a superadditive quantum capacity. Consider its unital extension $ \Phi^{\text{dpo}}_p $. Note that the gradient $ dQ_C\bbr{\Phi^{\text{dpo}}_p}/dC $ of the CQ trade-off curve cannot always stay at 0 for the choice of $ \Psi^{\text{dpo}}_p $ with a positive quantum capacity. 
It means there exists $ 0\leq\bar{C}< \calC_C\bbr{\Phi^{\text{dpo}}_p} $ such that
\be
dQ_C\bbr{\Phi^{\text{dpo}}_p}/dC=0,~~\forall 0\leq C\leq \bar{C}_-\label{eq:CQgradient}
\ee
and
\be
dQ_C\bbr{\Phi^{\text{dpo}}_p}/dC<0,~~\forall \bar{C}_+\leq C\leq \calC_C\bbr{\Phi^{\text{dpo}}_p}.\nonumber
\ee
\begin{enumerate}
	\item $ \bar{C}>0 $. In this case, we know $ Q_C\bbr{\Phi^{\text{dpo}}_p} $ is strictly concave at $ \bar{C} $.
	Also	
\be
	Q_{\bar{C}}\bbr{\Phi^{\text{dpo}}_p}=Q_0\bbr{\Phi^{\text{dpo}}_p}>Q_0\1\bbr{\Phi^{\text{dpo}}_p}\geq Q_{\bar{C}}\1\bbr{\Phi^{\text{dpo}}_p}.\nonumber
	\ee
	Here the equality follows from Eq. (\ref{eq:CQgradient}). The first inequality follows because $ \Psi^{\text{dpo}}_p $ has a superadditive quantum capacity, as both $ \calC_Q $ and $ \calC_Q\1$ remain unchanged after a unital extension, and $ Q_C $ reduces to the quantum capacity at $ C=0 $. The second inequality follows as the rate of quantum communication along the CQ trade-off curve must not exceed the quantum capacity.
	
	Choose the noise parameter $ \eta $ for the qubit dephasing channel $ \Psi^{\text{dph}}_\eta $ appropriately such that
	\be
	\calC_Q\bbr{\Psi^{\text{dph}}_\eta}=1-\calC_Q\bbr{\Psi^{\text{dpo}}_p}.\nonumber
	\ee
	Define
	\be
	\calN^1=\Phi^{\text{dpo}}_p\otimes\Phi^{\text{dph}}_\eta.\nonumber
	\ee
	It's clear that $ \calN^1 $ is a unitally extended channel of $ \Psi^{\text{dpo}}_p\otimes\Psi^{\text{dph}}_\eta $ and has $ \calC_Q\bbr{\calN^1}=\calC_Q\bbr{\Psi^{\text{dpo}}_p\otimes\Psi^{\text{dph}}_\eta}=1 $. The CQ trade-off curve is strictly concave and superadditive at $ \bar{C} $.
	The corresponding $ \Psi^0 $ is
	\be
	\Psi^0=\calI\otimes\Psi^{\text{dpo}}_1,\nonumber
	\ee
	\textit{i.e.,} a noiseless channel tensor a complete qubit depolarizing channel. $ \calN^0 $ is a unital extension of $ \Psi^0 $. 
	\item $ \bar{C}=0 $. Choose $ \eta_1 $ close to 1/2 such that 
	\be
	\frac{dQ_C\left(\Phi^{\text{dph}}_{\eta_1}\right)}{dC}\biggr|_{\calC_C\bbr{\Psi^{\text{dph}}_{\eta_1}}_-}>\frac{dQ_C\bbr{\Phi^{\text{dpo}}_p}}{dC}\biggr|_{0_+}.\nonumber
	\ee
	Let
	\be
	\calN^1=\Phi^{\text{dph}}_{\eta_1}\otimes\Phi^{\text{dpo}}_p\otimes\Phi^{\text{dph}}_{\eta_2},\nonumber
	\ee
	where $ \eta_2 $ is chosen such that
	\begin{align*}
	&\calC_Q\bbr{\calN^1}=\calC_Q\bbr{\Psi^{\text{dph}}_{\eta_1}\otimes\Psi^{\text{dpo}}_p\otimes\Psi^{\text{dph}}_{\eta_2}}\\
	=&\calC_Q\bbr{\Psi^{\text{dph}}_{\eta_1}}+\calC_Q\bbr{\Psi^{\text{dpo}}_p}+\calC_Q\bbr{\Psi^{\text{dph}}_{\eta_2}}=1.
	\end{align*}
	By our choice of $ \eta_1 $, $ Q_C\bbr{\Phi^{\text{dph}}_{\eta_1}\otimes\Phi^{\text{dpo}}_p} $ is strictly concave in $ C $ for $ 0<C<1 $. $ Q_C\bbr{\Phi^{\text{dph}}_{\eta_2}} $ is also strictly concave in $ C $. Thus $ Q_C\bbr{N^1} $ is strictly concave in $ C $, for $ 0<C<1 $.
	
	In this case, the corresponding $ \Psi^0 $ is
	\be
	\Psi^0=\calI\otimes\bbr{\Psi^{\text{dpo}}_1}^{\otimes 2},\nonumber
	\ee
	\textit{i.e.,} a noiseless channel tensor two copies of the  complete qubit depolarizing channel. $ \calN^0 $ is a unital extension of $ \Psi^0 $.
\end{enumerate}

\subsection{Additive CE, Superadditive Q and CQE}\label{Subsec:CEadd}
Here we construct a channel that has an additive CE trade-off capacity region, but a superadditive quantum capacity, hence a superadditive quantum dynamic capacity region.

Let $ \Psi^0 $ be a classical channel and $ \Psi^1 $ be the depolarizing channel $ \Psi^{\text{dpo}}_p $. $ p $ is chosen such that $ \Psi^{\text{dpo}}_p $ has a superadditive quantum capacity. Also, we require
\be
\calC_C\bbr{\Psi^0}>C_E\bbr{\Psi^1}.\label{eq:CEineq}
\ee

Now consider the switch channel $ \calN $, consisting of $ \calN^0 $ and $ \calN^1 $, which are unital extensions of $ \Psi^0 $ and $ \Psi^1 $. It can be easily shown that unital extension does not change the classical capacity with umlimited entanglement assistance of the qubit depolarizing channel. Thus Eq. (\ref{eq:CEineq}) implies
\be
\calC_{CE}\bbr{\calN^0}\supseteq\calC_{CE}\bbr{\calN^1}\supseteq\calC_{CE}\1\bbr{\calN^1}.
\ee
Hence
\begin{align*}
\calC_{CE}\bbr{\calN}&=\rmconv \bbr{\calC_{CE}\bbr{\calN^0},\calC_{CE}\bbr{\calN^1}}=\calC_{CE}\bbr{\calN^0}\\
&=\rmconv \bbr{\calC_{CE}\bbr{\calN^0},\calC_{CE}\1\bbr{\calN^1}}=\calC_{CE}\1\bbr{\calN},
\end{align*}
\textit{i.e.,} its CE trade-off capacity region is additive.

Since $ \calC_Q\bbr{\calN^0}=0 $, it is clear that the quantum capacity of $ \calN $ is the same as that of $ \calN^1 $, which is superadditive.

Note that $ \calN $ is a unitally extended channel. This fact will be implicitly used in Section \ref{Subsec:CEnQadd}.

\subsection{Additive CE and Q, Superadditive CQE}\label{Subsec:CEnQadd}
Previously in Section \ref{Subsec:CnQaddCQnonadd}, we give an example of a channel with an additive classical and quantum capacity, but whose CQ trade-off curve is superadditive. It is unclear if the channel has an additive CE trade-off capacity region, because the CE trade-off capacity region of the depolarizing channel has not been shown to be additive. This is itself an interesting question but we'll not explore it here.

We replace $ \Psi^{\text{dpo}}_p $ in the original argument of Section \ref{Subsec:CnQaddCQnonadd} by the channel constructed in Section \ref{Subsec:CEadd}. It's clear that the rest of the argument is not changed and $ \calN $ still has a superadditive CQ trade-off capacity region.

Now both $ \calN^0 $ and $ \calN^1 $ have an additive CE trade-off capacity region. It's clear that
\begin{align*}
\calC_{CE}\bbr{\calN}&=\rmconv\bbr{\calC_{CE}\bbr{\calN^0},\calC_{CE}\bbr{\calN^1}}\\
&=\rmconv\bbr{\calC_{CE}\1\bbr{\calN^0},\calC_{CE}\1\bbr{\calN^1}}=\calC_{CE}\1\bbr{\calN},
\end{align*}
\textit{i.e.,} the CE trade-off capacity region of $ \calN $ is additive.
\subsection{Additive CQ, Superadditive CQE}\label{Subsec:CQadd}
Our construction in Section \ref{Subsec:CaddCEnonadd} has a superadditive CE trade-off capacity region. But most likely its CQ trade-off capacity region is also superadditive. This is because in Section \ref{Subsec:CaddCEnonadd}, $ \calN^0 $ is the unital extension of a classical channel, and its CQ trade-off capacity region is trivial. Hence the CQ trade-off capacity region of $ \calN $ is given by that of $ \calN^1 $, which is most likely superadditive as well.

To achieve an additive CQ trade-off capacity region, we have to substitute $ \calN^0 $ with a channel that has a non-trivial CQ trade-off capacity region.

Recall that our construction in Section \ref{Subsec:CaddCEnonadd} requires $ \calN^0 $ and $ \calN^1 $ to have properties \ref{Prop:CaddCEnonadd1}-\ref{Prop:CaddCEnonadd3}.
These three properties ensure that $ \calN $ will have a superadditive CE trade-off capacity region, while its classical capacity is still additive.

In extending to a channel with an additive CQ trade-off capacity region, the additional properties we need are
\begin{enumerate}[label=\text{(G\arabic*)}]
	\item \label{G1}
	$\calC_{CQ}\bbr{\calN^0}\supseteq\calC_{CQ}\bbr{\calN^1}.$
	\end{enumerate}

Property \ref{G1} and \ref{Prop:General} ensure the CQ trade-off capacity region of $ \calN $ is additive, as
	\begin{align*}
	\calC_{CQ}\bbr{\calN}&=\rmconv\bbr{\calC_{CQ}\bbr{\calN^0},\calC_{CQ}\bbr{\calN^1}}=\calC_{CQ}\bbr{\calN^0}\\
	&=\calC_{CQ}\1\bbr{\calN^0}=\rmconv\bbr{\calC_{CQ}\1\bbr{\calN^0},\calC_{CQ}\1\bbr{\calN^1}}\\
	&=\calC_{CQ}\1\bbr{\calN}.
	\end{align*}

Unfortunately, we cannot find quantum channels $ \calN^0 $ and $ \calN^1 $ that satisfy all the properties. Hence we do not have an explicit construction in this case. This is because there are very few channels that we understand their dynamic capacity regions. This leaves us with a limited choice of candidates for $ \calN^0 $. However, in principle there is no obstacle and the construction will be readily available once we have a better understanding of quantum channels.
\section{Conclusion}

Unlike previous studies on additivity of single resource channel capacity, our work aimed to understand how additivity of single or double resource capacity regions will effect additivity of a general resource trade-off capacity. 
In contrast to the two known results in the literature; namely, (i) additivity of the quantum capacity implies additivity of the entanglement-assisted quantum capacity region and (ii) additivity of classical-quantum and classical-entanglement capacity regions implies additivity of the three resource capacity region, the additivity of all the remaining situations does not hold. In this work, we  identified  all possible occurrences where superadditivity could occur in the trade-off quantum dynamic capacity. Furthermore, we provided an explicit construction of quantum channels for most instances. 
Our main technical tool combines properties of switch channels and unital extension of known quantum channels. 

An obvious open question is an explicit construction of a quantum channel whose classical-quantum capacity region is additive, but its triple trade-off capacity is superadditive. Moreover, there are other triple resource trade-off capacity regions \cite{HsiehWilde10, Wilde12}. Could similar statements made in this work hold in these scenarios as well?

\section*{Acknowledgment}
EYZ and PWS are supported by the National Science Foundation under grant Contract Number CCF-1525130. QZ is supported by the Claude E. Shannon Research Assistantship. MH is supported by an ARC Future Fellowship under Grant FT140100574. PWS is supported by the NSF through the STC for Science of Information under grant number CCF0-939370.

\appendices
\section{Proof of Lemma~\ref{lem:unitalHadamard}}
\begin{proof}
Consider $ \Phi^0_{RC\to B^0} $ and $ \Psi^1_{A^1\to B^1} $, where $ \Phi^0 $ is a unital extension of a Hadamard channel $ \Psi^0_{C\to B^0} $, and $ \Psi^1 $ is an arbitrary channel.

The result follows if both the CQ and CE trade-off capacity regions of $\Phi^0$ are additive \cite{HsiehWilde10}. To show that the CQ trade-off capacity region is additive for $\Phi^0$, it was shown in Ref.~\cite{Bradler10} it suffices to prove that 
\begin{equation}
  f_\lambda\bbr{\Phi^0\otimes\Psi^1} = f_\lambda\bbr{\Phi^0}+ f_\lambda\bbr{\Psi^1}
\end{equation}
 for any channel $ \Psi^1 $, where
\begin{equation}
 f_\lambda\bbr{\calN} = \max_{\rho} I(X;B)_\sigma + \lambda I(A\rangle BX)_\sigma.
\end{equation}
The state $\sigma$ is the channel output state with $\rho$ being the input state (see, e.g., Theorem~\ref{thm_CQE}). In the following, we will only show that $f_\lambda\bbr{\Phi^0\otimes\Psi^1} \leq f_\lambda\bbr{\Phi^0}+ f_\lambda\bbr{\Psi^1}$ because the other direction is trivial from its definition.

Since $\Phi^0\otimes\Psi^1:CRA^1\to B^0B^1$ is a partial cq channel, then by the same argument as that in Lemma~\ref{lem:cqCQE}, $f_\lambda\bbr{\Phi^0\otimes\Psi^1}$ can be achieved with input states of the following form
\be
\rho_{XRACA^1}=\sum_{x,j}\frac{p(x)}{|R|}\ket{x,j}\bra{x,j}_X\otimes\ket{j}\bra{j}_R\otimes\phi_{ACA^1}^x,\nonumber
\ee
with output states
\be
\sigma_{XAB^0B^1}=\sum_{x,j}\frac{p(x)}{|R|}\ket{x,j}\bra{x,j}_X\otimes\sigma^{xj}_{AB^0B^1},  \label{eq:Hadamart_sigma}
\ee
where
\be
\sigma^{xj}_{AB^0B^1}=\Phi^0\otimes\Psi^1\bbr{\ket{j}\bra{j}_R\otimes\phi_{ACA^1}^x}.\nonumber
\ee

Let $ U^0_{C\to B^0E^0} $ and $ U^1_{A^1\to B^1E^1} $ be the isometric extensions of $ \Psi^0 $ and $ \Psi^1 $, and let
\begin{align*}
\varrho_{XACA^1}&=\sum_{x}p(x)\ket{x}\bra{x}_X\otimes\phi_{ACA^1}^x    \\
\omega_{XAA^1B^0E^0}&=\bbr{U^0\otimes I}\varrho_{XACA^1}\bbr{U^0\otimes I}^\dagger \\    
\varsigma_{XAB^0B^1E^0E^1}&=\bbr{U^0\otimes U^1}\varrho_{XACA^1}\bbr{U^0\otimes U^1}^\dagger. 
\end{align*}
Moreover, let
\be
\theta_{XYAB^1E^0E^1}=\calD^1_{B^0\to Y}\bbr{\varsigma_{XAB^0B^1E^0E^1}},\nonumber
\ee
where $\calD^2_{Y\to E^0}\circ \calD^1_{B^0\to Y}=\calD_{B^0\to E^0} $ is a degrading map for the Hadamard channel $ \Psi^0 $.

For any state $ \sigma_{XAB^0B^1} $ in Eq.~(\ref{eq:Hadamart_sigma}), we have
\begin{align*}
&f_\lambda\bbr{\Phi^0\otimes\Psi^1}\\
=&I\bbr{X;B^0B^1}_\sigma+\lambda I\bbr{A\rangle B^0B^1X}_\sigma\\
=&S\bbr{B^0B^1}_\sigma+\left[(\lambda-1)S\bbr{B^0B^1|X}_{\sigma}-\lambda S\bbr{AB^0B^1|X}_{\sigma}\right]\\
=&S\bbr{B^0B^1}_\varsigma+\left[(\lambda-1)S\bbr{B^0B^1|X}_{\varsigma}-\lambda S\bbr{AB^0B^1|X}_{\varsigma}\right], 
\end{align*}
where the last equality follows from the same argument used in Eqs.~(\ref{eq:lem3_1}) and (\ref{eq:lem3_3}). Then subadditivity of the von Neumann entropy and chain rule yield
\begin{align*}
\leq &S\bbr{B^0}_\varsigma+(\lambda-1)S\bbr{B^0|X}_{\varsigma}-\lambda S\bbr{E^0|X}_{\varsigma}\\
+&S\bbr{B^1}_\varsigma+(\lambda-1)S\bbr{B^1|B^0X}_{\varsigma}-\lambda S\bbr{E^1|E^0X}_{\varsigma}\\ 
\leq &S\bbr{B^0}_\varsigma+(\lambda-1)S\bbr{B^0|X}_{\varsigma}-\lambda S\bbr{E^0|X}_{\varsigma}\\
+&S\bbr{B^1}_\theta+(\lambda-1)S\bbr{B^1|XY}_{\theta}-\lambda S\bbr{E^1|XY}_{\theta}
\end{align*}
where the last inequality uses the fact that $ S\bbr{B^1|B^0X}_{\varsigma}\leq S\bbr{B^1|YX}_{\theta}$ due to the existence of $ \calD^1 $ and $ S\bbr{E^1|E^0X}_{\varsigma}
\geq S\bbr{E^1|YX}_{\theta} $ due to the existence of $ \calD^2 $. Finally,
\begin{align*}
=&\bbr{I\bbr{X;B^0}_{\omega}+\lambda I\bbr{AA^1\rangle B^0X}_{\omega}}\\
+&\bbr{I\bbr{XY;B^1}_{\theta}+\lambda I\bbr{AE^0\rangle B^1XY}_{\theta}}\\
\leq& f_\lambda\bbr{\Phi^0}+f_\lambda\bbr{\Psi^1}
\end{align*}
because $S(E^0|X)_\varsigma = S(AA^1B^0|X)_\omega$ and $S\bbr{E^1|XY}_{\theta} = S\bbr{AB^1E^0|XY}_{\theta}$.

To prove that the CE trade-off capacity region of the channel $\Phi^0$ is additive is equivalent to showing that \cite{Bradler10}:
\be
g_\lambda\bbr{\Phi^0\otimes \Psi^1} = g_\lambda\bbr{\Phi^0}+ g_\lambda\bbr{\Psi^1},
\ee
where $0\leq \lambda <1$, 
\begin{equation}
g_\lambda\bbr{\calN}=\max_\sigma I(AX;B)_\sigma - \lambda S(A|X)_\sigma
\end{equation}
and $\sigma$ is of the form given in Eq.~(\ref{eq:cqinput}).

However this proof proceeds similarly; hence, we will omit it.
\end{proof}

\section{Proof of Lemma \ref{lem:unitalclassical}}
\textit{Lemma 6}:
If $ \Psi^0 $ is a classical channel, then the dynamic capacity region is additive for $ \Psi^0\otimes\Psi^1 $, for arbitrary $ \Psi^1 $. 
Moreover, the dynamic capacity region of $ \Psi^0 $ is described by the following relation
\begin{align*}
C+2Q&\leq \calC_C\bbr{\Psi^0},\\
Q+E&\leq 0,\\
C+Q+E&\leq \calC_C\bbr{\Psi^0},
\end{align*}
where $ \calC_C\bbr{\Psi^0} $ is the classical capacity of $ \Psi^0 $.

The same holds for a unital extension of a classical channel.

\begin{proof}
Consider the 1-shot dynamic capacity region of $ \Psi^0_{A^{0'}\to B^0} $. By Lemma \ref{lem:cqCQE}, $ \calC_{CQE}\1\bbr{\Psi^0} $ can be achieved with respect to cq states $ \sigma_{XA^0B^0}=\Psi^0\bbr{\rho_{XA^0A^{0'}}} $, where $ \rho_{XA^0A^{0'}} $ is of the form
\be
\rho_{XA^0A^{0'}}=\sum_{x,j}p(x,j)\ket{x,j}\bra{x,j}_X\otimes\ket{j}\bra{j}_{A^{0'}}\otimes\phi_{A^0}^{xj}.\nonumber
\ee
Thus
\be
\sigma_{XA^0B^0}=\sum_{x,j}p(x,j)\ket{x,j}\bra{x,j}_X\otimes\sigma^{xj}_{A^0B^0},\nonumber
\ee
where
\be
\sigma^{xj}_{A^0B^0}=\Psi^0_{A^{0'}\to B^0}\bbr{\ket{j}\bra{j}_{A^{0'}}\otimes\phi^{xj}_{A^0}}\nonumber
\ee
is now a product state with respect to $ A^0 $ and $ B^0 $.

The three entropic quantities of interest can be simplied when evaluated with respect to $ \sigma_{XA^0B^0} $, as
\begin{align*}
I\bbr{A^0X;B^0}_\sigma&=S\bbr{B^0}_\sigma-\sum_{x,j}p(x,j)S\bbr{B^0}_{\sigma^{xj}}\leq \calC_C\bbr{\Psi^0},\\
I\bbr{A^0\rangle B^0X}_\sigma&=-\sum_{x,j}p(x,j)S\bbr{A^0}_{\sigma^{xj}}\leq 0,\\
I\bbr{X;B^0}_\sigma&\leq \calC_C\bbr{\Psi^0}.
\end{align*}
It's also clear that those inequalities can be achieved. Thus $ \calC_{CQE}\1\bbr{\Psi^0} $ is described by
\begin{align*}
C+2Q&\leq \calC_C\bbr{\Psi^0},\\
Q+E&\leq 0,\\
C+Q+E&\leq \calC_C\bbr{\Psi^0}.
\end{align*}
Since the classical capacity of a classical channel is additive, the dynamic capacity region of $ \Psi^0 $ is additive and is described by the same set of inequalities.

Next we show that the dynamic capacity region is additive for $ \Psi^0 $ and $ \Psi^1 $, with $ \Psi^1 $ arbitrary.

Since $ \Psi^0_{A^{0'}\to B^0}\otimes\Psi^1_{A^{1'}\to B^1} $ is a partial cq channel, its 1-shot dynamic capacity region $ \calC_{CQE}\1\bbr{\Psi^0\otimes\Psi^1} $ can be achieved with respect to cq states $ \sigma_{XAB^0B^1}=\Psi^0_{A^{0'}\to B^0}\otimes\Psi^1_{A^{1'}\to B^1}\bbr{\rho_{XAA^{0'}A^{1'}}} $, where $ \rho_{XAA^{0'}A^{1'}} $ is of the form
\be
\rho_{XAA^{0'}A^{1'}}=\sum_{x,j}p(x,j)\ket{x,j}\bra{x,j}_X\otimes\ket{j}\bra{j}_{A^{0'}}\otimes\phi^{xj}_{AA^{1'}}.\nonumber
\ee
For $ \rho_{XAA^{0'}A^{1'}} $ of this form, $ \sigma_{XAB^0B^1} $ is of the form
\be
\sigma_{XAB^0B^1}=\sum_{x,j}p(x,j)\ket{x,j}\bra{x,j}_X\otimes\sigma^{xj}_{AB^0B^1},\nonumber
\ee
with
\be
\sigma^{xj}_{AB^0B^1}=\Psi^0_{A^{0'}\to B^0}\otimes\Psi^1_{A^{1'}\to B^1}\bbr{\ket{j}\bra{j}_{A^{0'}}\otimes\phi^{xj}_{AA^{1'}}}.\nonumber
\ee
For such $ \sigma_{XAB^0B^1} $, each of the three entropic quantities have simple upper bounds,
\begin{align*}
I\bbr{AX;B^0B^1}_\sigma&\leq I\bbr{X;B^0}_\sigma+I\bbr{AX;B^1}_\sigma,\\
I\bbr{A\rangle B^0B^1X}_\sigma&=I\bbr{A\rangle B^1X}_\sigma,\\
I\bbr{X;B^0B^1}_\sigma&\leq I\bbr{X;B^0}_\sigma+I\bbr{X;B^1}_\sigma,
\end{align*}
where we've used subadditivity of the von Neumann entropy.
Thus the 1-shot dynamic capacity region of $ \Psi^0\otimes\Psi^1 $ has a simple upper bound
\be
\calC_{CQE}\1\bbr{\Psi^0\otimes\Psi^1}\subseteq \calC_{CQE}\1\bbr{\Psi^0}+\calC_{CQE}\1\bbr{\Psi^1}.\nonumber
\ee
It's trivial to extend it to the dynamic capacity region of $ \Psi^0\otimes\Psi^1 $
\be
\calC_{CQE}\bbr{\Psi^0\otimes\Psi^1}\subseteq \calC_{CQE}\bbr{\Psi^0}+\calC_{CQE}\bbr{\Psi^1}.\nonumber
\ee
Since the other direction of inclusion is obvious, we have
\be
\calC_{CQE}\bbr{\Psi^0\otimes\Psi^1}= \calC_{CQE}\bbr{\Psi^0}+\calC_{CQE}\bbr{\Psi^1}.\nonumber
\ee

For unital extensions of a classical channel, we observe that, if the Heisenberg-Weyl operators are defined on the standard basis for the output of the channel, then the resulting channel is also a classical channel. Hence the above result applies.
\end{proof}
\section{unital extension of the qubit dephasing channel and $ 1\to N $ cloning channel}\label{Appendix:dephasing}
\begin{lemma}\label{lem:dephasing}
The CE and CQ trade-off curve of the qubit dephasing channel and $ 1\to N $ cloning channels are unchanged after a unital extension.
\end{lemma}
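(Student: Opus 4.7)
My plan is to compare the 1-shot CE and CQ regions of $\Psi$ and its unital extension $\Phi$ via their characterizations in terms of the three entropic quantities $I(AX;B)_\sigma$, $I(A\rangle BX)_\sigma$, and $I(X;B)_\sigma$. By Lemma~\ref{lem:cqCQE} and the proof of Lemma~\ref{lem:unitaldynamic}, the optimal 1-shot region $\calC_{CQE}\1\bbr{\Phi}$ is attained on inputs of the form in Eq.~(\ref{eq:lem3_cqinput}), where those three quantities reduce to Eqs.~(\ref{eq:lem3_1})--(\ref{eq:lem3_3}). Applying Lemma~\ref{lem:cqCQE} to $\Psi$ itself on the same underlying cq ensemble $\{p(x,j),\phi^{xj}_{AA'}\}$, the analogous three quantities take the identical form except that $\log|B|$ is replaced by $S(B)_\sigma\leq\log|B|$. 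Hence $\calC_{CE}\1\bbr{\Phi}\supseteq\calC_{CE}\1\bbr{\Psi}$ and the analogous CQ inclusion are automatic, and the task reduces to proving the reverse inclusions.

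The key idea is that, for each of these two channels, covariance lets me transform any ensemble into one whose overall output marginal on $B$ is maximally mixed, without altering the per-element entropies $I(A\rangle B)_{\sigma^{xj}}$, $I(B\rangle A)_{\sigma^{xj}}$, and $S(B)_{\sigma^{xj}}$. For the qubit dephasing channel $\Psi^{\textrm{dph}}_\eta$, the Pauli covariance $P\Psi^{\textrm{dph}}_\eta(\rho)P=\Psi^{\textrm{dph}}_\eta(P\rho P)$ for every $P\in\{I,X,Y,Z\}$ (immediate from $\{X,Z\}=0$) allows me to replace the ensemble by its Pauli twirl $\{p(x,j)/4,\,(I_A\otimes P_{A'})\phi^{xj}(I_A\otimes P_{A'})\}_{P}$. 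Each new output is merely a local unitary conjugation on $B$, so per-element entropies are preserved; meanwhile, the qubit Pauli twirl $\tfrac{1}{4}\sum_P P\tau P=I/2$ forces the overall $B$-marginal to be $I/2$, giving $S(B)_\sigma=\log 2=\log|B|$.

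For the $1\to N$ cloning channel, I will use the analogous $SU(2)$-covariance $\Psi(U\rho U^\dagger)=V_U\Psi(\rho)V_U^\dagger$, where $V_U$ is the symmetric $N$-fold tensor representation of $U\in SU(2)$ acting irreducibly on the $(N+1)$-dimensional output symmetric subspace. Haar-twirling the input ensemble over $SU(2)$ again conjugates each output by a local unitary on $B$ (preserving the per-element entropies), while Schur's lemma forces the averaged $B$-marginal to be $I_{N+1}/(N+1)$, giving $S(B)_\sigma=\log(N+1)=\log|B|$. Plugging either symmetrized ensemble into $\Psi$'s formulas then reproduces the three entropic quantities attained by $\Phi$ on the original ensemble, establishing the reverse inclusions. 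Additivity of the CE and CQ regions of $\Psi$ (Hadamard, by Bradler et al.) and of $\Phi$ (by Lemma~\ref{lem:unitalHadamard}) finally lifts 1-shot equality to the regularized trade-off curves.

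The main delicate point I anticipate is the choice of $|B|$ for the cloning channel: Heisenberg--Weyl twirling (implicit in the unital extension definition via $|R|=|B|^2$) and $SU(2)$ twirling produce the same maximally mixed marginal $I_{N+1}/(N+1)$ only when $|B|$ is identified with the symmetric subspace dimension $N+1$, which is the natural output space of the cloning channel. A routine secondary check is that conjugation by local unitaries on $B$ really preserves every per-element entropy appearing in the formulas, which is immediate from unitary invariance of the von Neumann entropy.
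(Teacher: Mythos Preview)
Your argument is correct and arrives at the same conclusion as the paper, but by a genuinely different route. The paper works in the Lagrangian dual picture: it shows that for the unital extension $\Phi$ one has
\[
f_\lambda(\Phi)=\log|B|+\max_{\sigma}\bigl[(\lambda-1)S(B)_\sigma-\lambda S(AB)_\sigma\bigr]
\]
(with $\sigma_{AB}=\Psi(\phi_{AC})$), and then simply \emph{cites} Br\'adler--Hayden--Touchette--Wilde for the fact that $f_\lambda(\Psi)$ for the dephasing and $1\to N$ cloning channels has exactly this same form; the analogous claim for $g_\lambda$ handles CE. Your proof instead stays in the primal picture and supplies the mechanism behind that citation: the Pauli covariance of $\Psi^{\textrm{dph}}_\eta$ and the irreducible $SU(2)$ covariance of $\Psi^{1\to N}$ let you twirl any ensemble into one whose averaged $B$-marginal is maximally mixed while leaving every per-element entropy $S(B)_{\sigma^{xj}}$, $S(AB)_{\sigma^{xj}}$, $S(A)_{\sigma^{xj}}$ untouched, so that the three entropic quantities for $\Psi$ match those for $\Phi$ at the $1$-shot level; additivity (Hadamard on one side, Lemma~\ref{lem:unitalHadamard} on the other) then finishes. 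Your route is more self-contained and in fact yields the stronger statement $\calC_{CQE}^{(1)}(\Psi)=\calC_{CQE}^{(1)}(\Phi)$, not just the CE and CQ slices.

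Two small points of polish. First, your phrase ``Applying Lemma~\ref{lem:cqCQE} to $\Psi$ itself'' is a misattribution: $\Psi$ is not a partial cq channel, and you are not invoking that lemma there --- you are simply evaluating the three entropic quantities on a chosen cq ensemble, which is always legitimate. Second, the Haar twirl for the cloning channel produces a continuous classical register; this is harmless for the capacity region (Carath\'eodory-type reduction to finite ensembles applies), but if you want to stay strictly within discrete ensembles you can note that only a finite unitary $1$-design on the input qubit is needed, and irreducibility of the spin-$N/2$ representation then forces the averaged output to be maximally mixed.
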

\begin{proof}
Consider the qubit dephasing channel $ \Psi^{\textrm{dph}}_\eta $ and a $ 1\to N $ cloning channel $ \Psi^{1\to N} $, and their unital extensions $ \Phi^{\textrm{dph}}_\eta $ and $ \Phi^{1\to N} $. The statement of this lemma is equivalent to showing that
\begin{align*}
&f_\lambda\bbr{\Psi}=f_\lambda\bbr{\Phi}~~~\forall \lambda\geq 1,\\
&g_\lambda\bbr{\Psi}=g_\lambda\bbr{\Phi}~~~\forall 0\leq\lambda< 1.
\end{align*}
for
\be
(\Psi,\Phi)=\bbr{\Psi^{\textrm{dph}}_\eta,\Phi^{\textrm{dph}}_\eta},\bbr{\Psi^{1\to N},\Phi^{1\to N}}.\nonumber
\ee

In Lemma \ref{lem:unitaldynamic}, we have argued that the 1-shot dynamic capacity region of a unitally extended channel can be achieved with input of the form in Eq. (\ref{eq:lem3_cqinput}).
Evaluating $ f_\lambda\bbr{\Phi} $ on such states, one obtains
\begin{align}
f_\lambda\bbr{\Phi}&=\log(|B|)+(\lambda-1)S\bbr{B|X'}_{\sigma}-\lambda S\bbr{AB|X'}_{\sigma}\nonumber\\
&=\log(|B|)+\sum_{x,j,k}p(x,j,k)\left[(\lambda-1)S(B)_{\sigma^{xjk}}-\lambda S(AB)_{\sigma^{xjk}}\right]\nonumber\\
&=\log(|B|)+\sum_{x,j}p(x,j)\left[(\lambda-1)S(B)_{\sigma^{xj}}-\lambda S(AB)_{\sigma^{xj}}\right]\nonumber\\
&\leq \log(|B|)+\max_\sigma\left[(\lambda-1)S\bbr{B}_\sigma-\lambda S\bbr{AB}_\sigma\right],\label{eq:dephasingCQoutput}
\end{align}
where
\be\label{eq:lem7output}
\sigma_{AB}=\Psi_{C\to B}\bbr{\phi_{AC}}.
\ee

For such a $ \sigma_{AB}=\Psi\bbr{\phi_{AC}} $ that achieves Eq. (\ref{eq:dephasingCQoutput}), one can construct
\be
\rho_{XAA'}=\frac{1}{|R|}\ket{k}\bra{k}_X\otimes\ket{k}\bra{k}_R\otimes\phi_{AC}.
\ee
This state will saturate the above inequality.

For $ \Psi^{\textrm{dph}}_\eta $ and $ \Psi^{1\to N} $, it can be verified \cite{Bradler10} that their $ f_\lambda $ have the same form, \textit{i.e.,}
\be
f_\lambda\bbr{\Psi}=\log(|B|)+\max_\sigma\left[(\lambda-1)S\bbr{B}_\sigma-\lambda S\bbr{AB}_\sigma\right],
\ee
with $ \sigma $ of the form given in Eq. (\ref{eq:lem7output}).

The same argument also applies to $ g_\lambda $.
\end{proof}

The CQ trade-off curve of the qubit dephasing channel was computed in Ref. \cite{DevetakShor05}, and the CE trade-off curve was computed in Ref. \cite{Hsieh10}. The CE and CQ trade-off curves of the $ 1\to N $ cloning channel were given in Ref. \cite{Bradler10}. Other than the special cases ($ \eta=0, 1/2 $ for the dephasing channel, $ N=1 $ for the $ 1\to N $ cloning channel), it can be verified that their CE and CQ trade-off curves are strictly concave at every point. By Lemma \ref{lem:dephasing}, this property is true for their unital extensions.

\section{Convex hull}\label{Appendix:convexhullclosure}
Here we show that\footnote{$ \calN^0 $ and $ \calN^1 $ are assumed to be finite-dimensional.}
\begin{align*}
&\overline{\rmconv\bbr{\calC_{CQE}\bbr{\calN^0},\calC_{CQE}\bbr{\calN^1}}}\\
=&\rmconv\bbr{\calC_{CQE}\bbr{\calN^0},\calC_{CQE}\bbr{\calN^1}}.  
\end{align*}
We quote a few properties about convex hull and Minkowski addition that we will use \cite{Schneider93}: (i) For two closed sets $A$ and $B$ in $ \mathbb{R}^k $, if $A$ is bounded, then $ A+B $ is closed. (ii)  For two sets $ A $ and $ B $ in $ \mathbb{R}^k $, $ \rmconv(A+B)=\rmconv(A)+\rmconv(B) $ .  (iii) The convex hull of a bounded set in $ \mathbb{R}^k $ is also bounded. 

First, we note that, by Ref. \cite{HsiehWilde10}, all points in the 1-shot dynamic capacity region can be achieved by the classically enhanced father protocol, combined with unit protocols, \textit{i.e.,}
\be
\calC_{CQE}\1\bbr{\calN}=\bigcup_{\sigma}\calC_{CQE,\text{CEF}}\1\bbr{\calN}_\sigma+\calC_{CQE,\text{unit}},\nonumber
\ee
where 
\be
\calC\1_{CQE,\text{CEF}}\bbr{\calN}_\sigma=\{I(X;B)_\sigma,\frac{1}{2}I(A;B|X)_\sigma,-\frac{1}{2}I(A;E|X)_\sigma\}\nonumber
\ee
is the rate achieved using the classically enhanced father protocol, and $ \sigma $ is of the form in Eq. (\ref{eq:cqinput}). $ \calC_{CQE,\text{unit}} $ are all the rates achieved by the unit protocols. Clearly $ \calC_{CQE,\text{unit}} $ is convex and closed.
Define
\be
\calC\1_{CQE,\text{CEF}}\bbr{\calN}=\bigcup_{\sigma}\calC_{CQE,\text{CEF}}\1\bbr{\calN}_\sigma.\nonumber
\ee
Clearly $ \calC\1_{CQE,\text{CEF}}\bbr{\calN} $ is bounded by the input and output dimensions of $ \calN $.

Then
\begin{align*}
\calC_{CQE}\bbr{\calN}&=\overline{\bigcup_{k=1}\frac{1}{k}\bbr{\calC\1_{CQE,\text{CEF}}\bbr{\calN^{\otimes k}}+\calC_{CQE,\text{unit}}}}\\
&=\overline{\bigcup_{k=1}\frac{1}{k}\calC\1_{CQE,\text{CEF}}\bbr{\calN^{\otimes k}}+\calC_{CQE,\text{unit}}}.
\end{align*}
Denote
\begin{align*}
A&=\bigcup_{k=1}\frac{1}{k}\calC\1_{CQE,\text{CEF}}\bbr{\calN^{\otimes k}},\\
B&=\calC_{CQE,\text{unit}}.
\end{align*}
Since $ A $ is bounded, $ B $ is closed, by (i) and (iii), $ \overline{A}+B $ is also closed.\\
Since $ A+B\subseteq \overline{A}+B $, and $ \overline{A}+B $ is closed, we have
\be
\overline{A+B}\subseteq \overline{A}+B.\nonumber
\ee
It is also obvious that
\be
\overline{A+B}\supseteq \overline{A}+B,\nonumber
\ee
hence
\be
\overline{A+B}=\overline{A}+B.\nonumber
\ee
Denote
\be
\calC_{CQE,\text{CEF}}\bbr{\calN}=\overline{\bigcup_{k=1}\frac{1}{k}\calC\1_{CQE,\text{CEF}}\bbr{\calN^{\otimes k}}}.\nonumber
\ee
Then by the above arguments,
\be
\calC_{CQE}\bbr{\calN}=\calC_{CQE,\text{CEF}}\bbr{\calN}+\calC_{CQE,\text{unit}}.\nonumber
\ee
Now we apply the above result to $ \calN^0 $ and $ \calN^1 $.
\begin{align*}
&\rmconv\bbr{\calC_{CQE}\bbr{\calN^0},\calC_{CQE}\bbr{\calN^1}}\\
=&\rmconv\bbr{\calC_{CQE}\bbr{\calN^0}\cup\calC_{CQE}\bbr{\calN^1}}\\
=&\rmconv\bbr{\calC_{CQE,\text{CEF}}\bbr{\calN^0}\cup\calC_{CQE,\text{CEF}}\bbr{\calN^1}+\calC_{CQE,\text{unit}}}\\
=&\rmconv\bbr{\calC_{CQE,\text{CEF}}\bbr{\calN^0}\cup\calC_{CQE,\text{CEF}}\bbr{\calN^1}}+\calC_{CQE,\text{unit}}.
\end{align*}
In the last line, we used (ii).

Since $ \calC_{CQE,\text{CEF}}\bbr{\calN} $ is closed and bounded for any finite dimensional quantum channel $ \calN $, the same must be true for $ \calC_{CQE,\text{CEF}}\bbr{\calN^0}\cup\calC_{CQE,\text{CEF}}\bbr{\calN^1} $. Hence $ \rmconv\bbr{\calC_{CQE,\text{CEF}}\bbr{\calN^0}\cup\calC_{CQE,\text{CEF}}\bbr{\calN^1}} $ is closed and bounded. Thus $ \rmconv\bbr{\calC_{CQE,\text{CEF}}\bbr{\calN^0}\cup\calC_{CQE,\text{CEF}}\bbr{\calN^1}}+\calC_{CQE,\text{unit}} $ is also closed.

\end{document}